\documentclass[11pt,a4paper]{article}
\usepackage[margin=1.0 in]{geometry}
\usepackage{float}
\usepackage{graphicx} 
\usepackage{hyperref}
\usepackage{xcolor}
\usepackage{xurl}
\usepackage{amsfonts}
\usepackage{amsmath}
\usepackage{amsthm}
\usepackage{amssymb}
\newcommand{\ignore}[1]{}

\newtheorem{theorem}{Theorem}
\newtheorem*{theorem*}{Theorem}
\newtheorem{definition}{Definition}

\usepackage{boxedminipage}
\usepackage{graphicx}
\usepackage{mdframed}
\usepackage{float}

\newcommand{\JM}[1]{\textcolor{red}{\textbf{Julian: }{#1}}}
\setlength{\fboxrule}{1.3pt}
\newenvironment{thickframe}{\begin{center}\begin{boxedminipage}{0.98\textwidth}}{\end{boxedminipage}\end{center}}
\title{Multiple Proposer Transaction Fee Mechanism Design: Robust Incentives Against Censorship and Bribery}
\author{
  Aikaterini-Panagiota Stouka\\
  Nethermind Research\\
  \href{mailto:aikaterini-panagiota@nethermind.io}{aikaterini-panagiota@nethermind.io}\\
  \and
  Julian Ma\\
  Ethereum Foundation \\
  \href{mailto:julian@ethereum.org}{julian@ethereum.org} \\
  \and
  Thomas Thiery \\
  Ethereum Foundation \\
  \href{mailto:thomas@ethereum.org}{thomas@ethereum.org} \\
}
\date{March 2025}

\begin{document}

\maketitle

\begin{abstract}
    Transaction Fee Mechanism (TFM) design in blockchain protocols has gained significant attention following the pioneering work of Roughgarden [EC '21], which established a formal framework for analyzing user and block proposer incentives under various Transaction Fee Mechanisms, including Ethereum’s current fee mechanism EIP-1559. However, the original TFM framework and follow-up TFM works overlook the critical challenge of censorship resistance—specifically in the presence of an external malicious actor who is willing to bribe the proposer to censor a transaction. In this paper, we extend the Roughgarden's framework to capture censorship resistance under bribery attacks via a Bayesian game, where a strategic block proposer's "type" is determined by a bribe function from an external malicious actor. Under this framework, the definition of a standard TFM is extended to a bribery-aware TFM. This technique is broadly applicable to analyze censorship resistance under bribery attacks of both single and multiple proposer protocols within the original TFM scope. We choose to utilize it to evaluate the incentive compatibility and censorship resistance of several TFMs within the context of a multiple proposer protocol called Fork-Choice Enforced Inclusion Lists (FOCIL). FOCIL represents a critical evolution in the Ethereum roadmap, serving as the consensus and censorship resistance flagship for the upcoming Hegotá hard fork. It aims to bolster Ethereum's censorship resistance by enabling multiple proposers to contribute to block construction. While recent works such as Garimidi et al.[FC'25] have extended the TFM framework to multiple proposer settings, they do not aim to capture censorship under bribery attacks and they are not compatible with the unique hierarchical structure of FOCIL.
\end{abstract}

\section{Introduction}

Censorship resistance serves as a fundamental tenet of blockchain design. Currently, most protocols elect a single party per slot-the proposer\footnote{In this paper, we use Ethereum's terminology to reference consensus parties. Other blockchains may use different terminology for similar roles.}-to construct a block from user transactions broadcast across the peer-to-peer network. This collection of pending transactions is known as the mempool. In contemporary architectures, proposers often delegate the task of block construction to specialized entities, known as builders, in exchange for a tip; this process is typically facilitated through external auctions (cf.\cite{PBS}). Empirical evidence from studies such as \cite{heimbach2023pbsreality} and \cite{10.1145/3589334.3645431} confirms that Ethereum proposers (or builders) engage in active censorship in practice.

\paragraph{Transaction Fee Mechanism design and overview of \cite{roughgarden2021tfm}}
Transaction Fee Mechanism (TFM) design, pioneered by Roughgarden \cite{roughgarden2021tfm} and further developed by \cite{Bahrani2024PostMEV, Wu2024Maximizing, Gafni2025Welfare, chung2022foundations, Cullen2024APP},
establishes a formal framework for analyzing the incentives of users and proposers under fee mechanisms that determine how transaction fees (tips that users are willing to pay for their transaction inclusion) are allocated within the protocol.

In \cite{roughgarden2021tfm}, a single party proposes blocks; we refer to it as block producer. Users submit transactions to a mempool observed by this block producer, holding private valuations for their transaction's inclusion \footnote{It is assumed that the position of a transaction within a block does not affect a user's valuation and that users cannot observe other transactions in the mempool prior to submission.}. For each transaction, the sender specifies a fee that is paid only upon inclusion. The block producer chooses an \textit{allocation rule} $x$ to determine which user transactions are included in their block. Additionally, the block producer can insert their own transactions, referred to as ``fake" transactions, into their block. \par 
In this setting, a user's strategy is the fee they choose (referred to as their \textit{bid}), while the block producer's strategy consists of their allocation rule and their set of fake transactions. For each included user transaction, the block producer receives a portion of the fee defined by a\textit{ payment rule} $p$. A \textit{burning rule} $q$ determines the portion of the fee that is "burned" or redirected to an entity different from the block producer (for instance, the blockchain protocol's treasury). The payment and the burning rule are determined by the underlying blockchain protocol. Note that when a block producer adds fake transactions, they effectively only pay the fee that is going to be burnt, as the remainder of the fee is returned to themselves. A TFM is formally defined by the triple $(x,p,q)$. In \cite{roughgarden2021tfm}, the following properties related to the incentives of the users and the block producer are introduced.
\begin{itemize} 
\item A TFM $(x,p,q)$ is Dominant-Strategy Incentive Compatible (DSIC) when assuming that
the block producer follows the allocation rule $x$, every user has a dominant strategy, no
matter the private valuations or the bids of the other users.
\item  A TFM $(x,p,q)$ is Incentive Compatible for  Myopic Miners (MMIC), if for every
history $H$ of blocks in the blockchain and mempool $M$, the block producer maximizes their utility if: (i) they follow the allocation rule x and (ii) they do not add any fake transactions. The term “myopic” implies that the block producer tries to maximize their profit from the current block.
\item Another property introduced in \cite{roughgarden2021tfm}  is Off-Chain Agreement-proof (OCA-proof). At a high level, this property examines whether a coalition between the block producer and the
users can increase all of their utilities through off-chain payments.
\end{itemize}

While the current TFM literature has been instrumental in characterizing the economic stability of blockchain protocols, to the best of our knowledge, it does not provide a framework that simultaneously captures both the key properties of the framework introduced in \cite{roughgarden2021tfm}  and censorship resistance.

\paragraph{Multiple Proposer Architectures}
To bolster censorship resistance, protocol designers have proposed mechanisms that distribute authority across multiple parties per block. One example is Multiple Concurrent Block Proposers design (cf. \cite{garimidi2025multipleconcurrentproposers}) that involves multiple
proposers whose input is ordered by a predetermined ordering rule. Alternatively, inclusion lists, introduced in \cite{neuder2023eip7547} and later expanded as Fork-Choice Enforced Inclusion Lists (FOCIL) \cite{thiery2024eip7805}, allow an inclusion list committee to impose constraints on transaction inclusion, while final ordering is determined by a single lead proposer. Notably, FOCIL represents a critical evolution in Ethereum’s roadmap, as it serves as the consensus and censorship headliner for the upcoming Hegota hard fork \cite{consensusheadliner}, which is scheduled to follow the Pectra and Glamsterdam upgrades in late 2026.

At present, FOCIL design relies primarily on altruistic assumptions; it lacks a natively integrated TFM and has not been evaluated using the rigorous game-theoretic models established by Roughgarden\cite{roughgarden2021tfm} . While some recent studies have extended TFM design to multiple proposer settings, they typically assume symmetric power among all proposers. Such models fail to capture the unique asymmetric dynamics inherent in the FOCIL architecture, where a single lead proposer retains final ordering authority but is constrained by an inclusion list committee. In this paper, the lead proposer is referred to as the block producer (as in single proposer protocols), while the inclusion list committee is referred to as includers.

\subsection{Our results}

Our contribution is twofold. First, we extend Transaction Fee Mechanism design to incorporate censorship resistance under bribery attacks---a contribution of independent interest for any blockchain protocol that can be studied using Roughgarden's model \cite{roughgarden2021tfm} or within TFM scope. Second, we apply and further extend this framework to study FOCIL \cite{thiery2024eip7805}, the censorship resistance headliner of Ethereum's upcoming Hegota hard fork. In more detail, (i) we address a critical question left underexplored in FOCIL: how transaction fees should be distributed among multiple proposers to enhance censorship resistance when includers are rational and not altruistic and (ii) we prove that FOCIL maintains key incentive compatibility guarantees of the Ethereum TFM (cf. \cite{roughgarden2021tfm}) across both the current and our proposed TFMs. We note that off-chain agreements between the involved parties are outside the scope of this paper. Extending our model to capture such agreements---for instance, by enhancing the OCA-proof property---is left for future work.

\paragraph{Overview: extending TFM to account for censorship resistance under bribery attacks}
\label{briberyextension}
At a high level, we model bribery from an external malicious briber via an \textit{interim stage Bayesian game} in which the strategic type of the proposer(s) is determined by a bribe function that belongs to a set of candidate bribe functions. A bribe function determines the final reward/bribe the proposer receives from a malicious external briber who wants to censor a transaction. This function considers several factors such as the transaction characteristics and the proposers' strategy. Users are uncertain about the bribe functions of the proposer(s) and hold beliefs about them. We define this extended TFM as Bribery-aware TFM. A Bribery-aware TFM will be \textit{censorship resistant} under a set of bribe functions and users' beliefs if its allocation rule is bribe agnostic (does not account for bribes). 
 
\paragraph{Overview: extending \cite{roughgarden2021tfm} further to capture FOCIL}
\label{appendixcompariso}
 Although the technique of determining a proposer's strategic type via a bribe function can be applied to analyze censorship resistance of both single and multiple proposer protocols within the TFM framework, we specifically extend \cite{roughgarden2021tfm} to capture the asymmetric multiple proposer case that is compatible with FOCIL.
In more detail, our model expands\cite{roughgarden2021tfm} by incorporating several new dimensions. \par In more detail, we further account for the incentives of the includers. Like the block producer, includers possess strategic types determined by their respective bribe functions. In this incomplete information setting, each includer is aware of their own type but remains uncertain regarding the types of other includers and the block producer, maintaining only beliefs over these external types. Moreover, we extend the model to capture the fact that in FOCIL, a block that does not meet the inclusion lists requirements is rejected by attesters, entities on Ethereum that vote for the validity of the block. We abstract the role of attesters by assuming that a block gives zero transaction fees to the block producer, if it fails to satisfy inclusion list requirements. \par 
Furthermore, we consider additional strategies available to the block producer in the context of censorship, particularly under bribery attacks, and when it needs to satisfy  inclusion list requirements specified by the FOCIL protocol. In Roughgarden’s model, the block producer selects the allocation rule and may include fake transactions---but only directly in their block, not in the mempool. When a single party controls the block inclusion, and the users send transactions before they observe the mempool, there is no practical difference between the block producer (i) adding a transaction to the mempool and later including it in the block, and (ii) directly inserting the transaction into the block. In our setting, however, multiple parties independently contribute to inclusion lists or block construction. As a result, injecting fake transactions into the mempool becomes a strategically distinct action from directly including them in the inclusion list or the block. This distinction opens up new censorship-related strategies for the includers and the block producer. \par In addition, we account for the possibility that the block producer may be uncertain about whether they will actually create the block in the current slot---for instance, if the block producer is acting as a builder \cite{PBS}. To model this uncertainty in a single-shot game, based on the $\gamma$-strict utility of \cite{chung2022foundations}, we assume that if the block producer injects fake transactions into the mempool, they incur a cost; they pay a fraction $\gamma$ of the fake transaction’s fee. This parameter reflects the likelihood that the block producer will not end up creating the block and receive back the fee of these transactions that would go to the party who would create the block. \par Finally, in addition to the censorship resistance property described in the previous paragraph, we introduce a property called fair-under-congestion. At a high level, this property is satisfied by a TFM if the following holds: for every mempool and every transaction, there exists a transaction fee that ensures the transaction is included in both an inclusion list and the block, assuming parties follow the allocation rules specified by the TFM. This property is particularly critical for protocols such as FOCIL, where during periods of congestion (when the volume of pending transactions exceeds available block space), the block producer is permitted to omit transactions from inclusion lists without incurring a penalty. For instance, a payment rule that allocates the entire transaction fee to includers disincentivizes the block producer to fill their block with user transactions during periods of congestion.

\section{Other Related Works}
Apart from the related literature already discussed, our work is perhaps most closely related to \cite{Cullen2024APP}, which extends Roughgarden's model to multiple concurrent proposers. We differentiate our work by adhering to the EIP-1559 burning rule and focusing on a censorship resistant in the presence of a bribing adversary. Furthermore, we capture blockchain protocols with multiple proposers who do not propose blocks concurrently, but instead fulfill different roles sequentially. This approach captures FOCIL \cite{thiery2024eip7805}, which is the consensus and censorship resistant headliner for Ethereum's upcoming hard fork Hegota \cite{consensusheadliner}. The design of Transaction Fee Mechanisms (TFMs) for multiple proposers has also been explored in  \cite{garimidi2025transactionfeemechanismdesign,10.1007/978-3-032-07035-7_2}. However, our work differs in two key aspects. First, their research focuses on DAG-based protocols where multiple proposers submit blocks for the same height (distance from the first block) simultaneously, relying on an ordering rule to sequence transactions. In contrast, FOCIL involves multiple includers who propose inclusion lists, after which a single block producer selects which of these transactions are finalized in the block for that height. Second, their model does not account for bribery attacks or censorship resistance, which are the primary focus of this paper. \par
We note that our model is more complex than the unit-demand auction design (i.e., selling multiple items where each buyer desires at most one) \cite{nisan2007algorithmic} because in our case, multiple entities simultaneously ``sell" space for transactions and their available space is interdependent. For example, if an includer includes a transaction, the block producer incurs a loss by omitting it (e.g., as we will explain in the next section, their block is disregarded by the attesters, or they need to pay to fill their block with their own transactions). Conversely, even if an includer includes a transaction, its inclusion in the final block is not guaranteed. \par
We use modeling techniques from the literature studying censorship resistance. In \cite{10.1145/3658644.3670330} the authors study bribery attacks on consensus protocols. We adopt the functionality proposed in that work where bribes may depend not only on the strategies of individuals but also on the strategies of other parties. In \cite{cryptoeprint:2025/194} the authors propose AUCIL, an inclusion list design, and quantify its censorship resistance. Instead, we focus in FOCIL, and we study whether inclusion lists affect Ethereum's existing TFM. Moreover, our modeling techniques differ since we use an incomplete information setting (where proposers do not know the bribe function of the other proposers) and do not assume a coordination device, used in \cite{cryptoeprint:2025/194} to obtain a correlated Nash equilibrium. \cite{berger2025fraudproofs} studies economic censorship games but focuses on fraud proofs in optimistic rollups. 
Finally, \cite{fox2023censorshipresistanceonchainauctions} studies the censorship resistance of blockchain applications under a bribing, rational adversary and proposes a payment rule for multiple concurrent proposers that increases the cost of censorship. In contrast, our work examines censorship resistance through the lens of TFM design, enabling the joint evaluation of both a protocol’s resilience to bribery-based censorship, as well as the incentive compatibility of all participating parties. Moreover, our analysis focuses on FOCIL, which differs fundamentally from multiple concurrent proposer protocols. Furthermore, in our case, the briber is (i) malicious rather than rational, and may choose to bribe irrespective of their own value; (ii) is aware of users' transactions before they decide to bribe.(e.g., to target a specific competitor); and (iii) is more sophisticated, with the ability to offer conditional bribes.

\section {Background}
Below we give an overview of FOCIL \cite{thiery2024eip7805}, the consensus and censorship resistant headliner for Ethereum's upcoming hard fork Hegota \cite{consensusheadliner}.
FOCIL is an inclusion list design in which each member of a committee can submit a list of transactions, called \textit{inclusion list}, that must be included in the block for the current slot. This block is created by a different party, referred to as the \textit{block producer}. The committee members are known as \textit{includers} (cf.\cite{includers}). The block producer may add additional transactions not present in any inclusion list, but they must adhere to the rules set by the FOCIL protocol regarding the inclusion of list transactions. In particular, if the block is not fully filled with transactions, the producer is required to use transactions from the inclusion lists to occupy the remaining space unless these transactions are invalid (according to Ethereum's rule for transaction validity). 
\par 
When includers prepare their inclusion lists, they propagate them not only to the block producer but also to the attesters who will verify that the block producer has followed the inclusion list requirements. Any block that, in the view of the attesters, violates these rules is disregarded. 
For more details on the protocol design, cf. \cite{thiery2024eip7805}. Moreover, \cite{ma2024uncrowdable} explores the design rationale behind FOCIL, specifically why the block producer is still given monopoly ordering rights. \\
In our work, we analyze not only FOCIL as described above but also two different candidate implementations that affect its game-theoretic analysis: (i) \textit{unconditional FOCIL}: inclusion list transactions must be in the block even if the block is full and (ii) \textit{unique senders FOCIL}: each includer may only include one transaction per sender in their list. Moreover, we assume that includers can be ordered according to a deterministic algorithm. 

\section{Bribery-aware Transaction Fee Mechanism Design}
\label{briberyaware}
We model bribery-based censorship in single proposer TFM design as follows:
A malicious briber may pay the block producer to omit (censor) specific transactions. To incorporate this in Roughgarden’s model \cite{roughgarden2021tfm}, we assign the block producer a type (potentially known only to themselves), corresponding to a bribe function. This function specifies how the briber compensates the party based on both their strategy and the strategy of other parties (we extend bribing attacks presented in \cite{10.1145/3658644.3670330}). Then, Roughgarden’s model \cite{roughgarden2021tfm} is extended as follows: 

\paragraph{Types of the user}
Every user is of a specific type determined by the characteristics of the transaction they want to send. Every transaction has the following characteristics: (i)  \textit{size} $s_t$; (ii) \textit{value} $v_t$ per unit of size; (iii) the bid $b_{t}$  per unit of size and (iv)  \textit{public information} $p_{t}$. The value $v_t$ reflects the maximum amount per unit of size the user is willing to pay for the transaction to be included in the block. We follow \cite{roughgarden2021tfm} and assume that the value is not affected by the position of the transaction in the block. The value is known only to the user. The bid $b_{t}$ is the fee the user will pay if their transaction is included in the block. The public information $p_{t}$ is metadata of the transaction, such as its sender. This characteristic is provided as input to the bribe function to account for situations where the briber intends to censor transactions originating from a specific user. 
\paragraph{Beliefs of the user}
Every user $i$ has a belief regarding (i) the type of every other user, denoted by $\text{Belief}_{\text{User}_i \rightarrow \text{Users}}$, and (ii) the type of the block producer, denoted by $\text{Belief}_{\text{User}_i \rightarrow BP}$.

\paragraph{Types of the block producer}
The block producer's type is determined by the bribe they receive from the malicious briber for a target transaction $t$ at the conclusion of the game. This bribe is determined by a function $\mu_{BP}^{Bribe}$ with input $(s_t,b_t,p_t, H, M)$ that includes transaction's public characteristics, history of blocks $H$ and the mempool $M$. This function belongs to a set $\text{Bribe}^{BP}$. This set includes all the bribe functions of the block producer that we want to consider.  
\paragraph{Prior Knowledge}
All parties know their type and the set $\text{Bribe}^{BP}$.
\paragraph{Allocation rules}
The allocation rule is defined similarly to the Roughgarden's model, with the key distinction that a unique allocation rule exists for each potential block producer type.
\begin{definition} \label{allocation_produ} Allocation rule for the block producer of type $\mu_{BP}^{Bribe}$ is a vector-value function $x^{BP,\mu_{BP}^{Bribe}}$ that takes as input $H,M$ and  outputs $x^{BP,\mu_{BP}^{Bribe}}_t(H,M)\in \{0,1\}$ for every transaction $t \in M$. When the type of the block producer is implied by the context, we write $x^{BP}$ and $x^{BP}_t$ respectively. $x_t^{BP}(H,M)=1$ indicates that the block producer has included $t$ in their block, and $x_t^{BP}(H,M)=0$ the opposite.
\end{definition}
\paragraph{Payment and burning rules}
Let $B_k$ be the block that the block producer constructs.
The payment and burning rules are defined in accordance with the Roughgarden's model. Recall that the payment rule determines the compensation the block producer receives for each transaction included in their block from the total transaction fee (the bid), while the burning rule specifies the portion of that fee that is permanently removed from circulation (burnt).
\begin{definition}[Payment rule] The payment rule is a function $p^{BP}$ that takes as input $(H,B_k)$ and outputs the payment $p^{BP}_t(H,B_k)$  of the block producer per unit of size for every transaction $t \in B_k$. 
\end{definition}
\begin{definition}[Burning rule] The burning rule  is a function $q$ that takes as input $(H,B_k)$ and outputs the burnt amount per unit of size for every transaction $t \in B_k$. 
\end{definition}

\paragraph{Utilities}
The utility functions reflect the profit of the parties and are defined similarly to those in the Roughgarden's model, with two primary differences. The first is the inclusion of a "bribe loss" term: if a block producer fails to comply with an adversary's demand to censor a transaction, they forfeit a potential payment equal to the bribe they would have otherwise earned. This loss represents the opportunity cost of choosing honesty (or failing to censor) over the malicious bribe. The second difference is that users lack certain knowledge of the block producer's type and must instead operate based on their beliefs. Consequently, utility is defined as it is in the interim stage of a Bayesian game. For each user type, the utility is calculated as the expected value across all potential block producer types, weighted by the user's subjective probability that each specific type occurs (as specified by $\text{Belief}_{\text{User}_i \rightarrow BP}$).
\begin{definition}[Bribe Loss] For every transaction $t \in B_k \cap M$, the block producer of type $\mu_{BP}^{Bribe}$ incurs bribe loss $\mu_{BP}^{Bribe}(s_t,b_t,p_t, H, M)$. This means that the block producer will not receive the bribe for this transaction because they failed to comply with the briber's demand to censor it.
\end{definition}

\paragraph{Bribery-aware Transaction Fee Mechanism (Bribery-aware TFM)}
Given a set of candidate bribe functions $\text{Bribe}^{BP}$ and the Beliefs of the users regarding the types of the other users and the block producer (recall that for user $i$ these beliefs are denoted by $\text{Belief}_{\text{User}_i \rightarrow \text{Users}}$, $\text{Belief}_{\text{User}_i \rightarrow BP}$), we define a Bribery-aware Transaction Fee Mechanism as the tuple 

\begin{align*}
&(\{x^{BP,\mu_{BP}^{Bribe}}\}_{ \mu_{BP}^{Bribe}\in \text{Bribe}^{BP}}, p^{BP}, q)\}
\end{align*}

\paragraph{Properties of the Bribery-aware Transaction Fee Mechanism}
We define the properties that a Bribery-aware Transaction Fee Mechanism can satisfy. The first three properties are adaptations of the original Roughgarden's model. However, we extend these definitions to account for all potential block producer types. Finally, we introduce a fourth property which ensures that the mechanism is censorship resistant under an external malicious briber.
\begin{definition}[Dominant-Strategy Incentive Compatible (DSIC)]
A Bribery-aware Transaction Fee Mechanism is DSIC under the set of candidate bribe functions $\text{Bribe}^{BP}$ and the Beliefs of the users regarding the types of the other users and the block producer if the following holds:
Assuming the block producer of all types in $\text{Bribe}^{BP}$ follows the indicated allocation, every user has a dominant strategy no matter their beliefs for the transactions and the types of the other users.
\end{definition}

\begin{definition}[Myopic Block Producer Incentive Compatible (MBIC)]
A Bribery-aware Transaction Fee Mechanism is MBIC under the set of candidate bribe functions $\text{Bribe}^{BP}$ and the Beliefs of the users regarding the types of the other users and the block producer if the following holds:
For every $H,M$, for every type in $\text{Bribe}^{BP}$, the best response for the block producer of this type is to follow the indicated allocation rule and refrain from adding fake transactions to the mempool and to their block.
\end{definition}
\begin{definition}[Off-Chain Agreement-proof (OCA-proof)] A Bribery-aware Transaction Fee Mechanism is OCA-proof under the set of candidate bribe functions $\text{Bribe}^{BP}$ and the Beliefs of the users regarding the types of the other users and the block producer if the following holds: a coalition between any type of the block producer and the users cannot increase all their utilities via off-chain payments. 
\end{definition}

\begin{definition}[Censorship resistant]
    A Bribery-aware Transaction Fee Mechanism is Censorship resistant under the set of candidate bribe functions $\text{Bribe}^{BP}$ and the Beliefs of the users regarding the types of the other users and the block producer if the following holds: The allocation rules of all the types in $\text{Bribe}^{BP}$ of the block producer are bribe function agnostic (ignore the bribe functions).
\end{definition}
Note that if the allocation rule of the underlying TFM is bribe function agnostic, then the bribe functions under which the incentive properties hold reflect the bribery attacks  against which the TFM remains censorship resistant.

\section{Bribery-aware and Asymmetric Multiple Proposer Transaction Fee Mechanism Design}
\label{formalmodel}
In this section, we extend the bribery-aware TFM framework introduced in Section \ref{briberyaware} to accommodate an asymmetric multiple proposer design, thereby capturing the architectural requirements of FOCIL. Recall that in Section \ref{appendixcompariso}, we provide a high level overview describing how these extensions generalize the bribery-aware TFM to this setting.
\begin{itemize}

\item Let $H$ be the sequence of the preceding blocks $B_1, \ldots, B_{k-1}$. The new block is denoted by $B_k$.
\item Let $M$ be the mempool with all the available transactions the includers and the block producer can include in their inclusion lists and their block respectively. Following \cite{roughgarden2023transactionfeemechanismdesign}, we assume that all the parties have the same view on $M$. Let $M_0$ be the mempool that consists only of the transactions of the users (not the fake transactions originated from the includers and the block producer). Transactions in $M_0$ are considered candidates for censorship.
\item  Let $C_{Block}$ be the maximum size for $B_k$ and $C_{Incl}$ be the maximum size for an inclusion list.
\item Let $m$ be the number of includers and $n$ the number of the users.
\end{itemize}

\paragraph{Types of the user}
Every user is of a specific type determined by the characteristics of the transaction they want to send. Every transaction has the following characteristics: (i)  \textit{size} $s_t$; (ii) \textit{value} $v_t$ per unit of size; (iii) the bid $b_{t}$  per unit of size and (iv) \textit{public information} $p_{t}$. These characteristics are the same as those defined in Section \ref{briberyaware}. 
\paragraph{Beliefs of the user}
Every user $i$ has a belief about (i) which is the type of every other user denoted by $\text{Belief}_{\text{User}_i \rightarrow \text{Users}}$, and (ii) which is the type of every includer and the block producer denoted by $\text{Belief}_{\text{User}_i \rightarrow CM,BP}.$

\paragraph{Types of the includer}
 Each includer $j$ has a strategic type determined by the bribe they receive from the malicious briber for a transaction $t$ at the conclusion of the game. This bribe is contingent upon their decision to exclude transaction $t$, and is calculated based on their own strategy in coordination with the strategies of the includers and the block producer. Formally, this bribe is determined by a function $\mu_{CM_j}^{Bribe}$ with input $(s_t,b_t,p_t,\vec \alpha_t, H, M_0)$. This function belongs to a set $\text{Bribe}^{CM}$. $\text{Bribe}^{CM}$ includes all the bribe functions of the includers that we want to consider. The input $(s_t,b_t,p_t,\vec \alpha_t, H, M_0)$ is the same as those used in the bribe functions in Section \ref{briberyaware} with the difference that (i) part of the input is the original mempool $M_0$ that includes only the transactions submitted by the users and not those by includers and the block producer and (ii) part of the input are the strategies of all the includers and the block producer specified by $\vec\alpha_{t} \in \{0,1\}^{m+1}$ that is defined formally below. 
 \begin{definition}[Inclusion list vector]
When the types of the block producer and the includers are implied by the text, the inclusion list vector $\vec \alpha_{t} \in \{0,1\}^{m+1}$ indicates whether the block producer and includers included transaction $t$. If and only if the block producer included transaction $t$ in its block $B_{k}$, then $\vec \alpha_{t}[0] = 1$, and 0 otherwise. Similarly, if and only if the $j$th includer included transaction $t$ in its inclusion list, $\mathsf{IL}_{j}$, $\vec \alpha_{t}[j] = 1$, and 0 otherwise. Moreover, we define $\alpha:= \{\vec \alpha_t\}_{t \in B_k}$. When the types are not implied we write $\vec a_{t, \mu_{BP}^{Bribe}, \mu_{CM_1}^{Bribe}, \ldots, \mu_{CM_m}^{Bribe}} $.
\end{definition}
    
\paragraph{Beliefs of the includer}
Every includer has a belief about the type of the other includers and the block producer, denoted by $\text{Belief}_{CM}$. To simplify the notation, we assume that all the includers have the same beliefs. Otherwise, we can define the beliefs for every includer separately.

\paragraph{Types of the block producer}
The block producer has a type determined by the bribe they receive from the malicious briber for transaction $t$ at the end of the game. This bribe is determined by a function $\mu_{BP}^{Bribe}$ with input $(s_t,b_t,p_t,\vec \alpha_t, H, M_0)$. This function belongs to a set $\text{Bribe}^{BP}$. This set includes all the bribe functions of the block producer that we want to consider.  This function has the same inputs as the bribe function of the includers. 
\paragraph{Prior Knowledge}
All parties are aware of their own type and the sets $\text{Bribe}^{CM}, \text{Bribe}^{BP}$. The block producer knows the type of every includer. This assumption simplifies the model without weakening its implications, because the types of the includers affect the utility of the block producer only via the inclusion lists they create, which the block producer learns before constructing their block.

\subsection{Phases and Allocation Rules}
In this subsection, we outline the various phases of our model and describe how the allocation rules in Section \ref{briberyaware} are generalized to accommodate this assymetric multiple proposer setting. Moreover, we summarize the phases in Figure \ref{modelphases}.
\begin{figure*}[h]
    \centering
    \includegraphics[width=\textwidth]{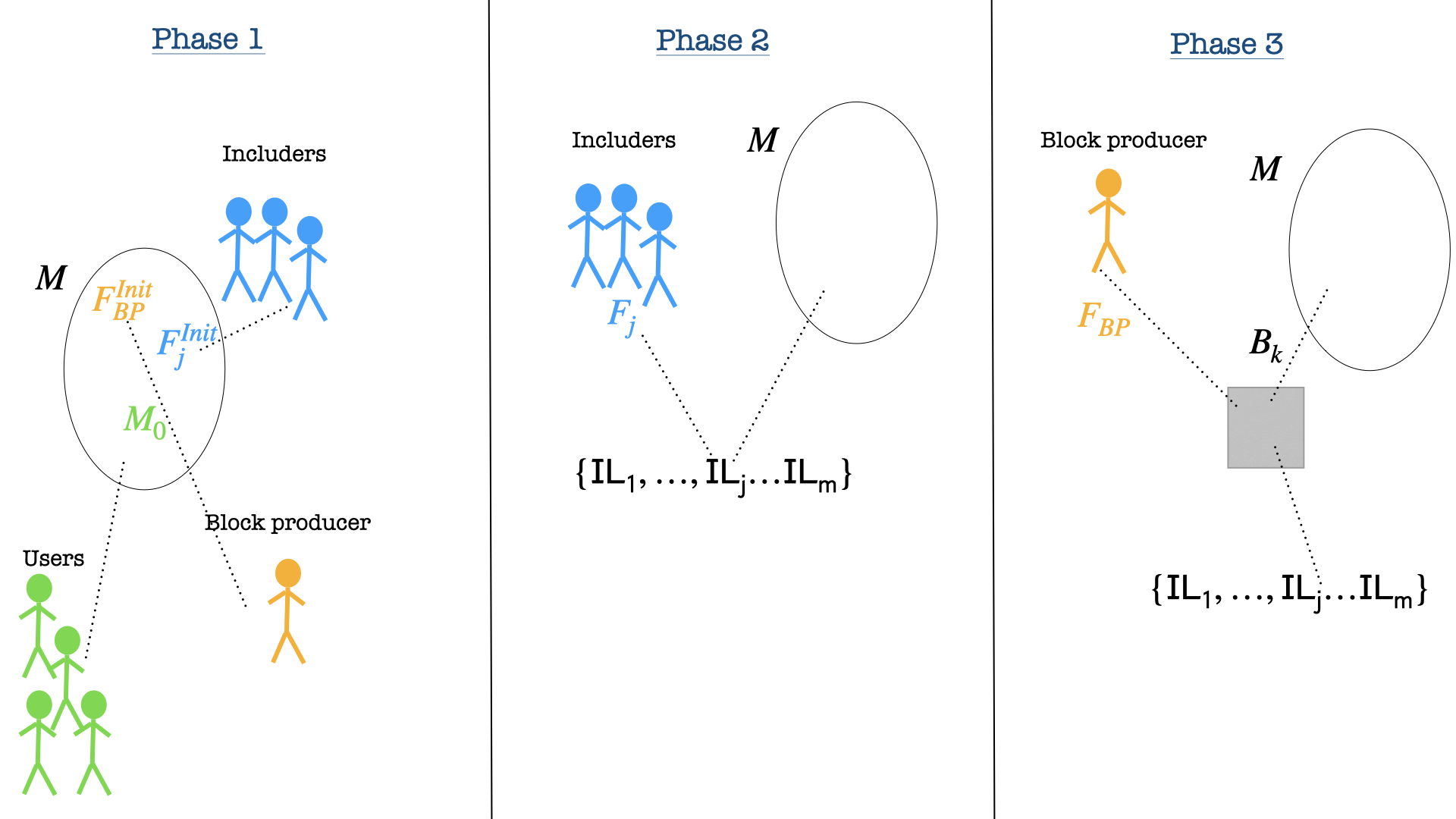} 
    \caption{Model Phases. In the first phase, users, includers, and the block producer send transactions to the mempool $M$. Transactions originating from the includers and the block producer are referred to as fake. In the second phase, includers create inclusion lists with transactions from the mempool M and/or fake transactions issued by themselves during this phase. In the image, we have considered that all includers have sent an inclusion list but this is not necessary. In the third phase, the block producer constructs their block including transactions from the mempool M, the inclusion lists and/or fake transactions issued by themselves during this phase. The sets $F^{Init}_{BP}, F^{Init}_j$ include the fake transactions submitted by the block producer and includer $j$ respectively during the first phase. $M_0$ is the set with transactions sent by the users. $F_j$ is the set with the fake transactions includer $j$ adds directly (without sending to the mempool) to their inclusion list and $F_{BP}$ the set with the fake transactions the block producer includes directly in their block.}
    \label{modelphases}
\end{figure*}

\paragraph{Phase $1$}
In the first phase, users, includers, and the block producer may send transactions to the mempool. Adopting the terminology of the Roughgarden's model, we classify all transactions submitted by the block producer or the includers in all phases as ``fake transactions" to distinguish them from those submitted by regular users. It is important to note that, despite this label, these transactions are legitimate and fully valid within the protocol. The strategies of the parties for this phase are as follows: 
\begin{itemize} 
 \item Strategy of user with transaction $t$: the bid $b_{t}$ per unit of size.
 \item Strategy of includer $j$ of type $\mu_{CM_j}^{Bribe}$: The set of fake transactions they send to the mempool, denoted by $F^{Init}_{j, \mu_{CM_{j}}^{Bribe}}$. 
 \item Strategy of the block producer of type $\mu_{BP}^{Bribe}$: The set of fake transactions they send to the mempool, denoted by $F^{Init}_{\mu_{BP}^{Bribe}}$. 
 \end{itemize}
 We use the abbreviation ``init" to distinguish fake transactions added to the mempool from those added directly in the inclusion lists or the final block respectively. Note that a unique strategy is defined for every potential type of each user, includer, and block producer. When the specific strategic types of the includers and the block producer are clear from the context, we simplify the notation for their initial fake transactions to $F^{Init}_{j}$ and $F^{Init}_{BP}$ respectively.
\paragraph{Phase $2$}
During this phase, every includer $j$ of type $\mu_{CM_j}^{Bribe}$ creates at most one inclusion list, denoted by $\mathsf{IL_{j,\mu_{CM_j}^{Bribe}}}$. When the type of the includer is denoted by the context we write $\mathsf{IL_{j}}$.  In this list, they can include transactions from $M$ and fake transactions created by themselves in this phase. The set of these fake transactions is denoted by $F_{j,\mu_{CM_j}^{Bribe}}$. When the type of the includer is implied by the context, the set is denoted by $F_{j}$. The includers send their inclusion lists to the block producer. The strategy of includer $j$ of type $\mu_{CM_j}^{Bribe}$ consists of (i) $F_{j,\mu_{CM_j}^{Bribe}}$ and (ii) the allocation rule defined in Definition \ref{allocation_incl} that determines which transactions from $M$ are included in $\mathsf{IL_{j,\mu_{CM_j}^{Bribe}}}$.
\begin{definition} 
\label{allocation_incl}
Allocation rule for an includer $j$ of type $\mu_{CM_j}^{Bribe}$ is a vector-value function $x^{j,\mu_{CM_j}^{Bribe}}$ that takes as input $(H,M)$ and outputs $x^{j,\mu_{CM_j}^{Bribe}}_t(H,M)\in \{0,1\}$ for every transaction $t \in M$. When the type of the includer $j$ is implied by the context, we write $x^j$ and $x^j_t$ respectively. $x_t^j(H,M)=1$ indicates that includer $j$ has included $t$ in their inclusion list, and $x_t^j(H,M)=0$ the opposite.
\end{definition}
\paragraph{Phase $3$}

In this phase, the block producer of type $\mu_{BP}^{Bribe}$ receives the list of inclusion lists $\mathsf{IL}=\{ \mathsf{IL_{1}}, \ldots, \mathsf{IL_{d}} \}$ where $d \leq m$ and creates a block $B_k$ with transactions from $M,\mathsf{IL}$ and/or fake transactions issued by themselves in this phase. The set of these fake transactions is denoted by $F_{\mu_{BP}^{Bribe}}$. If the type of the block producer is implied by the context we write $F_{BP}$.  The strategy of the block producer of type $\mu_{BP}^{Bribe}$ consists of (i) $F_{\mu_{BP}^{Bribe}}$ and (ii) the allocation rule defined in Definition \ref{allocation_prod} that determines which transactions from $M\cup  \mathsf{IL}$ will be included in $B_k$.

\begin{definition} \label{allocation_prod} Allocation rule for the block producer of type $\mu_{BP}^{Bribe}$ is a vector-value function $x^{BP,\mu_{BP}^{Bribe}}$ that takes as input $(H,M, \mathsf{IL})$ and inclusion lists $\mathsf{IL}$, and outputs $x^{BP,\mu_{BP}^{Bribe}}_t(H,M, \mathsf{IL})\in \{0,1\}$ for every transaction $t \in M \cup  \mathsf{IL}$. When the type of the block producer is implied by the context, we write $x^{BP}$ and $x^{BP}_t$ respectively. $x_t^{BP}(H,M, \mathsf{IL})=1$ indicates that the block producer has included $t$ in their block, and $x_t^{BP}(H,M,\mathsf{IL})=0$ the opposite.
\end{definition}
Following Roughgarden's approach we define\textit{ feasible} rules that respect maximum block size and maximum inclusion list size.
\begin{definition} 
\label{feasible}
\begin{itemize}
\item  An allocation rule $x^{j}$ is inclusion list-feasible if for every $H,M$ it holds:

$$
\sum_{t \in M}x_t^{j}(H,M)\cdot s_t \leq C_{Incl}
$$

\item  An allocation rule $x^{BP}_t$is block-feasible if for every $H,M,\mathsf{IL}$ it holds:

$$
\sum_{t \in M \cup \mathsf{IL} }x^{BP}_t(H,M,\mathsf{IL})\cdot s_t \leq C_{Block} 
$$

\item  A set of transactions $T$ is inclusion list-feasible if $\sum_{t \in T}s_t \leq C_{Incl}$.
\item  A set of transactions $T$ is block-feasible if $\sum_{t \in T}s_t \leq C_{Block}$.
\end{itemize}
\end{definition}
Note that the block producer may omit transactions from $\mathsf{IL}$  without having their block disregarded by attesters by strategically using fake transactions in the following ways: (i) the producer fills the block to its maximum capacity. Under the inclusion list rules, a full block is not disregarded by attesters even if it omits certain transactions from the inclusion lists; (ii) to execute a more elaborate deviation, the block producer spams the mempool with their own set of transactions $F^{Init}_{BP}$ so that includers select these instead of the transaction $t$ the producer wishes to censor. This allows the producer to satisfy the inclusion list requirements without including transaction $t$ in their block. Ordinarily, in this scenario, the block producer would be forced to include their own transactions that belong to $F^{Init}_{BP} \cap \mathsf{IL}$, and pay both the associated burn fees and the fees owed to the includers (see payment rule in Subsection \ref{payment}), otherwise their block would be disregarded by the attesters. To avoid this cost, the producer can include a specific ``invalidating" transaction early in the block (e.g., one that drains the sender’s balance). This renders the transactions in  $F^{Init}_{BP} \cap \mathsf{IL}$ invalid, allowing the producer to legally omit them from their block without violating inclusion list requirements.

\subsection{Payments and Costs}
\label{payment}
At the end of the game, the parties receive some payments and incur some costs based on their strategy and the strategies of the other parties.
\paragraph{Payment rules}

At a high level, the payment rules define the distribution of transaction fees for each transaction in the block, specifying the portions of the sender's payment awarded to the block producer and the includers, respectively. Note that the transaction sender pays the fees only if the transaction is included in a block that is not disregarded by the attesters. 
Formally, 
\begin{definition} \label{payment_producer} The payment rule for the block producer is a function $p^{BP}$ that takes as input $(H,B_k,\alpha)$ and outputs the block producer's payment $p^{BP}_t(H,B_k,\alpha)$ per unit of size for every transaction $t \in B_k$, provided the attesters approve the block $B_k$. 
\end{definition}
\begin{definition} \label{payment_includer} The payment rule for the includer $j$ is a function $p^{CM}$ that takes as input $(H, B_k, \alpha, j)$ and outputs the includer's $j$ payment $p^{CM}_t(H, B_k, \alpha, j)$ per unit of size for every transaction $t \in B_k$, provided the attesters approve the block $B_k$.
\end{definition}
\paragraph{Costs when the block is approved by the attesters}
Below we describe what costs the parties incur when the block is approved by the attesters. \par  
Firstly, the sender of the transaction (user) pays an amount that is burnt by the system (for instance, in Ethereum, this is the base fee, a protocol-computed reserve price determined by the burning rule of EIP-1559 \cite{roughgarden2020eip1559}). This amount is determined by the burning rule defined below in the same way as Roughgarden's model. 

\begin{definition}[Burning rule] The burning rule is a function $q$ that takes as input $(H,B_k)$ and outputs the burnt amount $q_t(H,B_k)$ per unit of size for every transaction $t \in B_k$. 
\end{definition}
Furthermore, we consider that the block producer may incur a fixed cost $\mu_{BP}^{Cost}$ per unit of size for every transaction they include in their block. This cost does not apply to their fake transactions. Depending on the underlying protocol, this cost may be assumed to be zero.
\paragraph{Costs regardless whether the block is approved by the attesters}
Assume that the types of the parties are implied by the context. Below, we describe what costs the parties incur regardless whether the block is approved by the attesters.
\begin{itemize}
\item For each fake transaction $t \in F^{Init}_{BP}$
  submitted to the mempool during Phase 1, the block producer incurs a cost equal to a fraction $\gamma$ of the total payment that would be awarded to the block producer and includers for that transaction ($p^{BP}_t(H,B_k,\alpha) \cdot s_t$ and $\sum_{j=1}^m p^{CM}_t(H,B_k,\alpha,j) \cdot s_t$ respectively). This captures the scenario where the block producer faces uncertainty regarding their selection as the creator of $B_k$, which would be necessary to successfully invalidate their initial fake transactions. As a result, they risk forfeiting the associated fees,  which would otherwise go to the creator of $B_k$ and the includers. If the block producer is certain they will be the creator of $B_k$, then $\gamma = 0$.
\item Consistent with the model we presented in Section \ref{briberyaware}, the includers and the block producer incur a bribe loss reflecting the external payment they forfeit when they fail to adhere to the briber's censorship requirements.
\begin{definition}[bribe loss] Assuming that the types of the parties are implied by the context:
\begin{itemize}
\item For every transaction $t \in B_k \cap M_0$, the block producer of type $\mu_{BP}^{Bribe}$ incurs bribe loss $\mu_{BP}^{Bribe}(s_t,b_t,p_t,\vec \alpha_t, H, M_0)$.
\item For every transaction $t \in \mathsf{IL_j} \cap M_0$ an includer $j$ of type $\mu_{CM_j}^{Bribe}$ incurs bribe loss $\mu_{CM_j}^{Bribe}(s_t,b_t,p_t,\vec \alpha_t,H,M_0)$.
\end{itemize}
\end{definition}
 \item Every includer incurs cost $\mu_{CM}^{Cost}$ per unit of size for every transaction they include in their inclusion list, not originated from them (regardless of whether this transaction is included in the block). Depending on the underlying protocol, this cost may be assumed to be zero.
\end{itemize}
\subsection{Utilities}
\paragraph{Utilities when the types of the parties are fixed and known}
The utilities of the parties represent their net profit, derived from the payment rules and cost defined in the preceding subsection.
In Definitions \ref{utility_us},\ref{utility_incl}, \ref{utility_pr} in Figures \ref{utility_user}, \ref{utility_producer}, we define formally the utilities of the parties when the types of all the parties are fixed (denoted by $\mu_{BP}^{Bribe},\mu_{CM_1}^{Bribe},\ldots \mu_{CM_m}^{Bribe}$) and known to all the other parties.

\begin{figure*}[h]

\begin{thickframe}
\begin{mdframed}
\underline{\textbf{Utility of the user when the types of the parties are fixed and known}:}\\[5pt]
At a high level, when the transaction of the user is included in the current block and the block is approved by the attesters, they gain the value of the transaction per unit of size. Moreover, they pay the transaction fee to the includers and the block producers (if any, depending on the payment mechanism), and the burning fee, per unit of size.

\begin{definition} \label{utility_us} Utility of a user who sent a transaction $t$ with value $v_t$ and size $s_t$.
\begin{itemize}
\item  If  $t \in B_k$  and $B_k$ meets
protocol’s predetermined criteria for inclusion lists (this means that it is approved by the attesters) then
    
    \begin{align*}
     & u_t(b_t, H, B_k,\alpha) = (v_t-p^{BP}_t(H,B_k, \alpha)-  \sum _{j\in \{1, \ldots, m\}}p^{CM}_t(H,B_k, \alpha,j)-q_t(H,B_k))\cdot s_t
    \end{align*}

\item  Otherwise: $u_t(b_t,H, B_k,\alpha)=0$.
\end{itemize}
\end{definition}
\vspace{5 mm}
\underline{\textbf{Utility of includer $j$ when the types of the parties are fixed and known}}: \\ [5pt]
At a high level, if $B_k$ is approved by the attesters, for every transaction in $B_k  $ not originated from the includer (this means that it does not belong to $F_{j} \cup F^{Init}_{j}$), the includer receives the corresponding transaction fee. In addition, for every transaction from $M_0$ in their inclusion list, they lose the bribe. Furthermore, for every transaction in their inclusion list, not created by them, they lose the per unit of size cost $\mu_{CM}^{Cost}$. 

   Finally, if $B_k$ is approved by the attesters, they pay the burning fee and the payment for the block producer and the other includers, per unit of size for every fake transaction in $B_k$ they have submitted to the mempool or they have included in their inclusion list.

\begin{definition} \label{utility_incl} Utility of the includer $j$.
\begin{itemize}
\item  If $B_k$ meets
protocol’s predetermined criteria for inclusion lists:
    \begin{align*}& u_{CM}(b_t,H,B_k,\alpha,j)=  \sum_{t \in B_k \cap  \neg ( F_{j} \cup F^{Init}_{j})}[p^{CM}_t(H,B_k,\alpha,j) \cdot s_t]\\& - \sum_{t \in\mathsf{IL_j} \cap M_0}[\mu_{CM_j}^{Bribe}(s_t, b_t,p_t,\vec \alpha_t, H, M_0)]-\sum_{t \in \mathsf{IL_j}\cap \neg ( F_{j} \cup F^{Init}_{j})}[\mu_{CM}^{Cost} \cdot s_t]\\&-\sum_{t\in B_k \cap (F_{j}\cup F^{Init}_{j}) }[q_t(H,B_k) +p^{BP}_t(H,B_k, \alpha) + \sum _{i\in \{1, \dots, m\}\setminus j }p^{CM}_t(H,B_k, \alpha,i)]\cdot s_t \end{align*}
\item Otherwise: The same as above with the difference that the first and the fourth term are equal to $0$. 
\end{itemize}

\end{definition}

\end{mdframed}
\end{thickframe}
\caption{Utility of the users and includers}
\label{utility_user}
\end{figure*}

\begin{figure*}[h]

\begin{thickframe}
\begin{mdframed}
\underline{\textbf{Utility of the block producer}}: 
At a high level, if $B_k$ is approved by the attesters, for every transaction in $B_k$, the block producer receives the corresponding fee and incurs the cost $\mu_{BP}^{Cost}$, per unit of size (except for the transactions that originated from the block producer). Moreover, regardless whether $B_k$ is approved by the attesters, for every transaction in $B_k \cap M_0$, the block producer loses the bribe. Additionally, if $B_k$ is approved by the attesters, for every fake transaction in $B_k$ created by them, they pay the burning fee and the fee for the includers per unit of size (recall that the burning fee and the fee for the includers are paid by the sender of the transaction; thus the block producer pays these amounts only for the transactions created by them - fake transactions).  Finally, for every transaction they submitted to the mempool in Phase $1$, they pay $\gamma$ fraction of the fee that corresponds to the block producer and the committee. 
 
\begin{definition}
\label{utility_pr}

Let $Fee_t$ be the total fee per unit of size that a transaction $t$ gives to the includers and the block producer when it is included in both an inclusion list and a block. The structure of this fee depends on the specifics of the bid $b_t$ and the fee mechanism under examination. 

\begin{itemize}
\item If $B_k$ meets
protocol’s predetermined criteria for inclusion lists:
\begin{align*}
& u_{BP}(b_t,H,B_k,\alpha)=\sum_{t \in B_k \cap \neg (F_{BP}\cup F^{Init}_{BP})}[(p^{BP}_t(H,B_k,\alpha)\\&  -  \mu_{BP}^{Cost}) \cdot s_t]  -\sum_{t \in B_k \cap M_0}[ \mu_{BP}^{Bribe}(s_t,b_t,p_t,\vec \alpha_t, H, M_0)] \\& -\sum_{t\in B_k \cap (F^{Init}_{BP}\cup F_{BP}) }[(q_t(H,B_k) +\sum _{i\in \{1, \dots, m\}}p^{CM}_t(H,B_k, \alpha,i))]\cdot s_t \\&  -\sum_{t \in F^{Init}_{BP}}[\gamma \cdot Fee_t \cdot s_t]  
\end{align*}
\item Otherwise: The same with the difference that the first and the third term are equal to $0$. 
\end{itemize}
\end{definition}

\end{mdframed}
\end{thickframe}
\caption{Utility of the block producer}
\label{utility_producer}
\end{figure*}

\paragraph{Utilities when the users and the includers are not aware of the type of the other includers and the block producer}
The utilities in this case are defined as in the interim stage of a Bayesian game. For each user and includer type, the
utility is calculated as the expected value across all potential block
producer and includer types weighted by the party’s
subjective probability that each specific type occurs (as specified
by their beliefs).

\subsection{Formal Definition of Bribery-aware and Asymmetric Multiple
Proposer Transaction Fee Mechanism and its Properties}
Assume 
\begin{itemize}

\item Sets of candidate bribe functions $\text{Bribe}^{CM},\text{Bribe}^{BP}$.
\item Beliefs of the parties about the bribe functions of the other parties, denoted by $$\{\text{Belief}_{\text{User}_i \rightarrow \text{Users}}, \text{Belief}_{\text{User}_i \rightarrow \text{CM,BP}}\}_{i \in \{1,...n\}}, \text{Belief}_{CM}$$

\item Parameter $\gamma$ related to the probability that the block producer is not the creator of the block.

\end{itemize}
\paragraph{Bribery-aware and Asymmetric Multiple Proposer Transaction Fee Mechanism}
A Bribery-aware and Asymmetric Multiple Proposer Transaction Fee Mechanism (TFM) under the above sets and parameters is the following tuple : 

\begin{align*}
&(\{x^{BP,\mu_{BP}^{Bribe}}\}_{ \mu_{BP}^{Bribe}\in \text{Bribe}^{BP}}, \\& \{x^{j, \mu_{CM_j}^{Bribe}}\}_{\forall \mu_{CM_j}^{Bribe}\in \text{Bribe}^{CM}},  p^{CM}, p^{BP}, q)
\end{align*}

Note that for simplicity, we consider TFMs where all the includers of the same type have the same allocation rule. 
We will examine Bribery-aware and Asymmetric Multiple Proposer Transaction Fee Mechanisms under the following properties:
\paragraph{Dominant-Strategy Incentive Compatible (DSIC)}
This definition is similar to the one presented in Section \ref{briberyaware} (which generalized \cite{roughgarden2021tfm} to multiple block producer types) but it incorporates includers. At a high level, it examines whether the users have a dominant strategy assuming that all the types of the includers and the block producer follow the indicated allocation rules.
\begin{definition}    
A Bribery-aware and Asymmetric Multiple Proposer TFM is DSIC under the above sets and parameters if the following holds: Assuming that the includers and the block producer of all types in $\text{Bribe}^{CM}, \text{Bribe}^{BP}$ follow the indicated allocation, every user has a dominant strategy no matter their beliefs for the transactions and the types of the other users.
\end{definition}

\paragraph{Myopic Committee Bayesian-Nash Incentive Compatible (MCBN)}
This definition examines whether an includer has incentives to follow the indicated allocation rule and refrain from adding fake transactions assuming that all the other includers and the block producer of all types do the same (regardless of the content of the mempool and the previous blocks). 

\begin{definition}

A Bribery-aware and Asymmetric Multiple Proposer TFM is MCBN under the above sets and parameters if the following holds for every includer $j$:

For every $H$, $M_0$, assume that: (i) every type of block producer in $\text{Bribe}^{BP}$ follows the indicated allocation rule and does not add any fake transactions to their block and the mempool and (ii) every type of the other includers in $\text{Bribe}^{CM}$ follows the indicated allocation rule and does not add any fake transactions to their inclusion list and the mempool. Then, for every type in $\text{Bribe}^{CM}$, the best response for the includer $j$ of this type, based on their beliefs $\text{Belief}_{CM}$, is to follow the indicated allocation rule and refrain from adding fake transactions to the mempool and their inclusion list (this means that $F_j, F^{Init}_{j}$ are empty).
\end{definition}

\paragraph{Myopic Block Producer Bayesian-Nash Incentive Compatible (MBBN)}
This definition examines whether a block producer has incentives to follow the indicated allocation rule and refrain from adding fake transactions assuming that all the includers of all types do the same (regardless of the content of the mempool and the previous blocks). 

\begin{definition}
A Bribery-aware and Asymmetric Multiple Proposer TFM is MBBN under the above sets and parameters if the following holds: For every $H,M_0$, if all the types of the includers follow the indicated allocation rules and do not add any fake transactions to their inclusion lists and the mempool then: for every type in $\text{Bribe}^{BP}$, the best response for the block producer of this type is to follow the indicated allocation rule and refrain from adding fake transactions to the mempool and to their block (this means that $F_{BP}, F^{Init}_{BP}$ are empty). 
\end{definition}

\paragraph{Myopic Block Producer Incentive Compatible (MBIC)}
This definition examines whether a block producer has incentives to follow the indicated allocation rule and refrain from adding fake transactions regardless of the includers' strategies, and the content of the mempool and the previous blocks. 
\begin{definition}
A Bribery-aware and Asymmetric Multiple Proposer TFM is MBIC under the above sets and parameters if the following holds: For every $H,M_0$ and strategy of the includers, for every type in $\text{Bribe}^{BP}$, the best response for the block producer of this type is to follow the indicated allocation rule and refrain from adding fake transactions to the mempool and to their block (this means that $F_{BP}, F^{Init}_{BP}$ are empty).
\end{definition}

\paragraph{MBBN vs MBIC}

Note that MBBN is a weaker property than MBIC because it makes a (Nash equilibrium related) assumption for the strategy of the includers. Moreover, if MCBN is combined with MBBN, then this means that the strategy profile where (i) all the types of the includers follow the indicated allocation and they do not add fake transactions to the mempool and their inclusion lists and (ii) all the types of the block producer follow the indicated allocation rule and they do not add fake transactions to the mempool and their block is a Bayesian Nash equilibrium.

\paragraph{Censorship resistant}
This definition examines whether the indicated allocation rules of all the types of the includers and the block producer are bribe function agnostic.
\begin{definition}
A Bribery-aware and Asymmetric Multiple Proposer TFM is Censorship resistant under the above sets and parameters if the following holds: The allocation rules of all the includers' types in $\text{Bribe}^{CM}$ and block producer's types in $\text{Bribe}^{BP}$ are bribe function agnostic (ignore the bribe functions). 
\end{definition}

\paragraph{Fair-under-congestion}
This definition ensures that the payment rule does not disincentivize the block producer from filling their block with transactions during periods of congestion (recall that a payment rule that allocates the entire transaction fee to includers does not satisfy this condition). 

\begin{definition}

A Bribery-aware and Asymmetric Multiple Proposer TFM is fair-under-congestion under the above sets and parameters if the following holds:  Assume arbitrary $H$, $M_0$ that is not block-feasible (Def. \ref{feasible}) and types of includers and block producer in $\text{Bribe}^{CM}$, $\text{Bribe}^{BP}$ respectively. Let: (i) $\{\mathsf{IL_1}, \ldots, \mathsf{IL_d}\}$ be the inclusion lists if all the includers follow the indicated allocation rule and do not add fake transactions to the mempool and their inclusion list; (ii) $B_k$ be the block if the block producer follows the indicated allocation rule and does not add any fake transactions to the mempool and the block. Then, for every $t \in M_0 \setminus B_k$ , there is a bidding strategy the user could follow for their transaction to be included in at least one inclusion list and the block, assuming that all the other transactions remain the same and the includers and block producer adhere to the indicated allocation rules and they do not add any fake transactions.
\end{definition}

\section{FOCIL TFMs}

\indent We use our model to identify both suitable and unsuitable Bribery-aware and Asymmetric Multiple Proposer TFMs for FOCIL \cite{thiery2024eip7805}. For simplicity, we write TFM instead of Bribery-aware and Asymmetric Multiple Proposer TFM. In more detail, we consider two types of TFMs that could be used for FOCIL \cite{thiery2024eip7805} and find one that is unsuitable. All these TFMs adhere to the current Ethereum's burning rule (EIP-1559). Under this burning rule, an amount from the transaction fee---referred to as the base fee---is burnt. The base fee is adjusted based on the number of transactions in the previous block to regulate supply and demand (cf. \cite{buterin2019eip1559}). \par  In the first TFM, denoted by \textit{Double TFM}, the user specifies two fees: one that is given to the includers, and another given to the block producer. In the second TFM, called \textit{Single TFM}, the user specifies one, total fee and the system (i.e., the blockchain protocol) determines how the fee is split between the includers and the block producer. We treat the fraction of the fee allocated to the includers as a parameter and analyze how varying this parameter impacts censorship resistance. If this parameter is set to allocate the entire fee to the block producer, the system effectively implements a TFM for the current version of FOCIL where the transaction bid is the same as Ethereum's bid and thus does not reimburse the includers. The same holds if we remove the fee that corresponds to includers in the Double TFM. Not reimbursing the includers assumes that includers are altruistic and do not require financial incentives to include transactions into their inclusion lists. Finally, in the third TFM, referred to as \textit{Single Prioritized TFM}, the user also sets a single fee and the system determines how the fee is split, with priority given to the includers. The latter is unsuitable for FOCIL and is used as an example to prove a negative result using our framework.

\begin{figure*}[h]
    \centering
    
    \begin{tabular}{c|c|c|c}
         & Double TFM & Single TFM & Single Prioritized TFM  \\
         \hline
         DSIC under assumptions & x & x & n/a \\
         MCBN & x & x & n/a \\
         MBBN & x & x & n/a \\
         Universal Censorship Resistance & x & - & n/a \\
         Simple User Experience & - & x & n/a \\
         Fair-under-congestion & x & x & - \\
    \end{tabular}
    \caption{Overview of results. Properties that are satisfied are denoted by x; those that are not by -; n/a means we do not provide proofs. The assumptions for DSIC are similar to the ones in \cite{roughgarden2021tfm} : (i) users cannot overbid, meaning they do not submit a bid higher than their transaction's value; and (ii) the base fee is not excessively low, meaning that the number of transactions with values higher than $r+\mu^{Cost}_{BP}$ is less than the maximum number of transactions in a block.}
    \label{tab:resultsoverview}
\end{figure*}

\paragraph{Definitions and assumptions}
In the FOCIL TFMs we introduce, we use the following terms: \par

\begin{definition}
\label{fee}
 \underline{Block producer fee of a transaction $t$} is the amount the sender of $t$ will pay to the block producer if they include the transaction in their block. \par 

 \underline{Committee fee of a transaction $t$} is the amount that the sender of $t$ will pay to includers (committee) if the transaction is included in an inclusion list \textbf{and} the block.

\end{definition}
In the following sections, we denote by $r$ the burning fee per unit of size. Moreover,  for simplicity, we assume that all the transactions have the same size denoted by $s$, but in Appendix \ref{variation}, we explain how our theorems are affected if we remove this assumption. In Appendix \ref{variation}, we also discuss some variations of the results we present below; for example, we discuss how our results are affected if there is at least one includer who is honest and thus never censors, or in the case of unconditional inclusion lists.   

\par 
\subsection{Double and Single TFM}
Recall that a Bribery-aware and
Asymmetric Multiple Proposer Transaction
Fee Mechanism consists of the payment rules for the includers and the block producer respectively, the burning rule, and the allocation rule for every type of the block producer and includer. Both Double and Single TFM, as already mentioned, follow Ethereum's current burning rule. Moreover, their allocation rules are bribe function agnostic which means that they are censorship resistant under the bribe functions that we have determined. Below, we describe how transaction bids would need to adjust to allow for an includer fee and the other components of Double and Single. Moreover, in this section, we describe the properties Double and Single TFM satisfy (according to our theorems and proofs) and we summarize these in Table \ref{tab:resultsoverview}. Formal definitions, theorems, and proofs are provided in Appendix \ref{double} and Appendix \ref{single}.  
\subsubsection{Components of Double TFM}
\paragraph{Transaction bid structure and the payment rule of Double TFM}
In Double TFM we define transaction's bid as $b_t=(\delta^{CM}_t,\delta^{BP}_t,c_t)$, where $\delta^{CM}_t$is the maximum fee per unit of size for the includers, $\delta^{BP}_t$ is the maximum fee per unit of size for the block producer, and $c_t$ is the maximum amount per unit of size the user is willing to pay for all the fees including burning fee. Given this transaction bid, the payment rule allocates transaction's fee as follows: \par (i) The block producer fee and the committee fee (Definition \ref{fee}) of a transaction $t$ of size $s$ are equal to $y=\max \{\min \{ \delta_t^{BP} \cdot s, c_t\cdot s -r \cdot s\},0\}$ and $\max \{\min \{ \delta_t^{CM}\cdot s, c_t \cdot s-r \cdot s- y\},0\}$ respectively;\par  (ii) The entire committee fee is awarded to the includer (if any) with the smallest order who includes the transaction in their inclusion list. \par Note that if we remove $\delta^{CM}_t$ from the transaction bid and we allocate a zero amount to the includers, then both the transaction bid structure and the payment rule become similar to Ethereum’s current transaction bid and payment rule.
\paragraph{Allocation rule of Double TFM}
 For the type in  $\text{Bribe}^{BP}$, \textit{the allocation rule for the block producer} is the following: They select transactions satisfying ($c_t \geq r +\mu^{Cost}_{BP}, \delta_t^{BP}\geq \mu^{Cost}_{BP}$ ), prioritizing those with the highest block producer fee until they make their block full or there are no other available transactions. 
\par For every type in $\text{Bribe}^{CM}$, \textit{the allocation rule of every includer} is the following: Every includer with order $j$ (the best order is $1$) chooses the following deterministic algorithm: First, the includer computes the set of transactions that a block producer, adhering to the specified allocation rule, would include in their block. The includer then removes from the mempool all transactions not belonging to this set. Additionally, the includer excludes any transactions offering a committee fee lower than $\mu^{Cost}_{CM}\cdot s$.  After that the includer:
(i) computes the subset of transactions from the mempool $M$ that maximise the utility of includer with order $1$, if they are included in their inclusion list. As we have assumed that every transaction has the same size, this corresponds to the set of transactions offering the highest committee fees; (ii) removes these transactions from the mempool $M$ and repeats the same procedure for the includers with order $2, \dots, j-1$, and (iii) selects the transactions that yield the highest committee fees for inclusion.
 \subsubsection{Components of Single TFM}
 \paragraph{Transaction bid structure and payment rule of Single TFM}
 In Single TFM, transaction's bid is $b_t=(c_t)$, where $c_t$ is the maximum amount per unit of size the user is willing to pay. The system splits the fee as follows: after subtracting the burning fee, the block producer receives an amount that is equal to the cost they incur for processing each transaction-namely $\mu^{Cost}_{BP}\cdot s$. If the remaining amount after burning is less than  $\mu^{Cost}_{BP}\cdot s$, the block producer receives the entire remaining amount. Any residual amount beyond this is then distributed as follows: $z$  fraction of this amount is allocated to the committee and $(1-z)$ fraction to the block producer. In Appendix \ref{app:zminimumbribe}, we examine how $z$ affects the minimum bribe needed for a briber to make a transaction $t_0$ to be omitted from a block. At a high level, the value of $z$ that maximizes this minimum bribe is increasing in $r$ and depends on $r,c_{t_0}$, which means that it does not apply to every target transaction. In Single TFM, the committee fee is shared in the same way as in Double TFM: it is allocated to the includer with the smallest order who includes this transaction in their inclusion list, if this transaction is included in the block. If no includer includes this transaction in their inclusion list, the user does not pay the committee fee. Formally, the block producer fee is equal to $\max \{[\min \{ \mu^{Cost}_{BP}\cdot s, c_t\cdot s -r \cdot s\}+\max \{(c_t\cdot s -r \cdot s-\mu^{Cost}_{BP}\cdot s),0\}\cdot (1-z)],0 \}$, and the committee fee  equal to $\max \{(c_t\cdot s -r \cdot s-\mu^{Cost}_{BP}\cdot s),0\}\cdot z$.
\paragraph{Allocation rule of Single TFM}
The includers allocation rule remains the same. The block producer's allocation rule remains the same, except that in this case, they only verify whether $c_t \geq r +\mu^{Cost}_{BP}$, not if $\delta_t^{BP}\geq \mu^{Cost}_{BP}$, as the bid does not contain a separate $\delta_t^{BP}$ field.

\paragraph{Bribe functions used to evaluate Double TFM and Single TFM and the censorship resistance they reflect} In Appendix \ref{bribefunction}, we define formally the bribe functions under which the properties we describe below and we summarize in Table \ref{tab:resultsoverview} are satisfied. Note that in our theorems, the bribe functions are complex because (i) they are specifically designed to capture the maximum values for which the stated properties hold. This means that if a higher bribe is offered through these bribe functions, the corresponding properties no longer hold; (ii) they account for corner cases that are very rare (for instance, when the block is filled with transactions with nearly zero fees, etc.). \par  Recall that if the allocation rule of the underlying TFM is bribe function agnostic, then the bribe functions under which the incentive properties hold reflect the bribery attacks  against which the TFM remains censorship resistant. In our case, the bribe functions we define, together with their tightness, reflect the following. \par In most cases, when there is no congestion, the minimum bribe an external briber needs to
censor a transaction $t_0$, is roughly the block producer fee plus block producer's cost to perform the following deviation: to add fake transactions to the mempool, making the includers prefer these transactions
over $t_0$, only to later invalidate them with another fake transaction in their block. If there
is a positive probability $\gamma$ that the block producer is not the block creator of the current slot
(thereby successfully invalidating the fake transactions they added to the mempool) then: the
higher the committee fee, the more costly the latter deviation is. This indicates that a payment rule that assigns a fraction of the transaction fee to includers can increase censorship resistance. This means that FOCIL design can benefit from these TFMs when includers are rational rather than altruistic. When there is congestion, the minimum bribe is approximately the difference between the block
producer fee of $t_0$ and the next highest fee of the available transactions.
\paragraph{Properties of Double and Single TFM}
\label{dsicassumptions}
Both Double and Single TFM are MCBN and MBBN. Recall that these properties are related to the incentives of the includers and the block producer to follow the indicated allocation rule and refrain from adding fake transactions when all the other includers and/or the block producer do the same. \par Moreover, both Double and Single TFM are DSIC under the following assumptions: (i) users cannot overbid, meaning they do not submit a bid higher than their transaction's value; and (ii) the base fee is not excessively low, meaning that the number of transactions with values higher than $r+\mu^{Cost}_{BP}$ is less than the maximum number of transactions in a block. Under similar assumptions, the authors in \cite{roughgarden2021tfm} proved that current Ethereum's TFM is DSIC referring to it as ``usually" DSIC. Furthermore, both  Double and Single TFM satisfy fair-under-congestion property. \par 
Note that the analysis of these TFMs reflects the censorship resistance and incentive properties of the existing FOCIL design. Recall that a Single TFM with its parameter set to allocate the entire fee to the block producer or a Double TFM without the fee that corresponds to the includers effectively reflects the current's FOCIL design that does not modify the existing Ethereum's TFM and does not reimburse includers.

\paragraph{Trade-offs between Double TFM and Single TFM}
We define ``Simple User Experience" as the property that the user does not need to set an extra fee compared to the current Ethereum TFM. We define ``Universal Censorship Resistance" as the property that it can offer the same level of censorship resistance for every transaction if the user selects an appropriate fee.
Double TFM satisfies ``Universal Censorship Resistance" and not ``Simple User Experience" and Single TFM the opposite. This holds because: (i) in Single TFM, the value of $z$ that offers better censorship resistance for a transaction $t_0$ depends on $r, c_{t_0}$, which means that it does not apply to every target transaction, compared to Double TFM where the user is free to determine the fraction of the total fee that is allocated to includers and the block producer; (ii) in Single TFM, the user does not need to set two types of fees (one for includers and another for the block producer), as happens in Double TFM.    
\subsection{Single Prioritized TFM}
\label{unfair}
In this section, we prove that a TFM with the following characteristics, called \textit{Single Prioritized TFM}, is not fair-under-congestion if $\mu^{Cost}_{BP}>0$:
\begin{itemize}
 \item  The users set a single fee and the system splits this fee between the block producer and the committee so that: (i) If the transaction is included in an inclusion list and the block, the entire fee ia allocated to the committee, and (ii) if the transaction is included in the block but is not included in any inclusion list, the entire fee is awarded to the block producer
 \item  The indicated allocation rule for the block producer is to select the transactions that yield the highest block producer fees.
 \end{itemize}
 The proof is as follows: \par When the block producer follows the indicated allocation rule, they include only transactions that do not belong to an inclusion list in their block. This happens because if they include a transaction that belongs to an inclusion list, they receive zero fees and incur a cost $\mu^{Cost}_{BP}\cdot s$. Thus, if the block producer follows the indicated allocation rule, no bidding strategy can make this transaction be included in both an inclusion list and the block.
\section{Conclusion}
Our work extends Transaction Fee Mechanism design to account for censorship resistance under bribery attacks, a fundamental property of blockchain protocols. Our framework can be utilised to study TFMs in any blockchain protocol that can be analyzed using Roughgarden’s model \cite{roughgarden2021tfm}, and our techniques apply to both single and multiple proposer blockchain protocols. Furthermore, we apply and extend our framework to study  Fork-Choice Enforced Inclusion Lists(FOCIL) \cite{thiery2024eip7805}, a critical evolution in the Ethereum roadmap that serves as the  censorship resistance and consensus flagship for the upcoming Hegotá hard fork. Although recent works such as Garimidi et al. [FC’25] have extended TFM design to multiple proposer settings, they do not aim to capture censorship under bribery attacks and are not compatible with the structure of FOCIL. We study FOCIL under several TFMs, and prove and quantify how TFMs that reimburse includers can increase its censorship resistance when includers are rational rather than altruistic. Moreover, we show that FOCIL maintains key incentive compatibility guarantees of the Ethereum TFM \cite{roughgarden2021tfm} across both the current and our proposed TFMs. We note that off-chain agreements between the involved parties are outside the scope of this paper. Extending our model to capture such agreements is left for future work.

\section*{Acknowledgments}
The authors would like to thank Barnabé Monnot and 
Luca Zanolini for their helpful comments and suggestions. The first author was supported by a grant from the Ethereum Foundation for this research.
\bibliographystyle{splncs04}
\bibliography{references}

@book{nisan2007algorithmic,
  title={Algorithmic game theory},
  author={Nisan, Noam and Roughgarden, Tim and Tardos, Eva and Vazirani, Vijay V},
  year={2007},
  publisher={Cambridge university press}
}

@inproceedings{heimbach2023pbsreality, author = {Heimbach, Lioba and Kiffer, Lucianna and Ferreira Torres, Christof and Wattenhofer, Roger}, title = {Ethereum's Proposer Builder Separation: Promises and Realities}, year = {2023}, isbn = {9798400703829}, publisher = {Association for Computing Machinery}, address = {New York, NY, USA}, url ={https://doi.org/10.1145/3618257.3624824}, doi = {10.1145/3618257.3624824}, abstract = {With Ethereum's transition from Proof-of-Work to Proof-of-Stake in September 2022 came another paradigm shift, the Proposer-Builder Separation (PBS) scheme. PBS was introduced to decouple the roles of selecting and ordering transactions in a block (i.e., the builder), from those validating its contents and proposing the block to the network as the new head of the blockchain (i.e., the proposer). In this landscape, proposers are the validators in the Proof-of-Stake consensus protocol, while now relying on specialized block builders for creating blocks with the highest value for the proposer. Additionally, relays act as mediators between builders and proposers. We study PBS adoption and show that the current landscape exhibits significant centralization amongst the builders and relays. Further, we explore whether PBS effectively achieves its intended objectives of enabling hobbyist validators to maximize block profitability and preventing censorship. Our findings reveal that although PBS grants validators the opportunity to access optimized and competitive blocks, it tends to stimulate censorship rather than reduce it. Additionally, we demonstrate that relays do not consistently uphold their commitments and may prove unreliable. Specifically, proposers do not always receive the complete promised value, and the censorship or filtering capabilities pledged by relays exhibit significant gaps.}, booktitle = {Proceedings of the 2023 ACM on Internet Measurement Conference}, pages = {406–420}, numpages = {15}, keywords = {blockchain, ethereum, proposer-builder seperation}, location = {Montreal QC, Canada}, series = {IMC '23} }

@misc{neuder2023eip7547,
    title = {EIP-7547: Inclusion Lists},
    author = {mike (michaelneuder) and Vitalik (vbuterin) and Francesco (fradamt) and Terence (terencechain) and potuz (potuz) and Manav (manav2401)},
    year = {2023},
    url = {https://eips.ethereum.org/EIPS/eip-7547},
}

@misc{thiery2024eip7805,
    title = {{EIP}-7805: Fork-Choice Enforced Inclusion Lists ({FOCIL})},
    author = {Thomas Thiery and Francesco D'Amato and Julian Ma and Barnabé Monnot and Terence Tsao and Jacob Kaufmann and Jihoon Song},
    year = {2024},
    url = {https://eips.ethereum.org/EIPS/eip-7805},
}

@misc{consensusheadliner,
title = {CMC Crypto News. Vitalik Buterin Backs FOCIL as Ethereum Schedules Hegota Upgrade},
author = {Ayesha Aziz},
    year={February 2026},
    url= {https://coinmarketcap.com/academy/article/vitalik-buterin-backs-focil-as-ethereum-schedules-hegota-upgrade},
}

@misc{garimidi2025multipleconcurrentproposers,
      title={Multiple Concurrent Proposers: Why and How}, 
      author={Pranav Garimidi and Joachim Neu and Max Resnick},
      year={2025},
      eprint={2509.23984},
      archivePrefix={arXiv},
      primaryClass={cs.CR},
      url={https://arxiv.org/abs/2509.23984}, 
}

@misc{Bahrani2024PostMEV,
      author = {Maryam Bahrani and Pranav Garimidi and Tim Roughgarden},
      title = {Transaction Fee Mechanism Design in a Post-{MEV} World},
      howpublished = {Cryptology {ePrint} Archive, Paper 2024/331},
      year = {2024},
      url = {https://eprint.iacr.org/2024/331}
}

@InProceedings{Wu2024Maximizing,
  author =	{Wu, Ke and Shi, Elaine and Chung, Hao},
  title =	{{Maximizing Miner Revenue in Transaction Fee Mechanism Design}},
  booktitle =	{15th Innovations in Theoretical Computer Science Conference (ITCS 2024)},
  pages =	{98:1--98:23},
  series =	{Leibniz International Proceedings in Informatics (LIPIcs)},
  ISBN =	{978-3-95977-309-6},
  ISSN =	{1868-8969},
  year =	{2024},
  volume =	{287},
  editor =	{Guruswami, Venkatesan},
  publisher =	{Schloss Dagstuhl -- Leibniz-Zentrum f{\"u}r Informatik},
  address =	{Dagstuhl, Germany},
  URL =		{https://drops.dagstuhl.de/entities/document/10.4230/LIPIcs.ITCS.2024.98},
  URN =		{urn:nbn:de:0030-drops-196266},
  doi =		{10.4230/LIPIcs.ITCS.2024.98},
  annote =	{Keywords: Blockchain, Mechanism Design, Transaction Fee}
}

@article{Gafni2025Welfare,
title = {Transaction fee mechanisms robust to welfare-increasing collusion},
journal = {Games and Economic Behavior},
volume = {157},
pages = {351-375},
year = {2026},
issn = {0899-8256},
doi = {https://doi.org/10.1016/j.geb.2026.02.005},
url = {https://www.sciencedirect.com/science/article/pii/S0899825626000266},
author = {Yotam Gafni and Aviv Yaish},
keywords = {Axiomatic mechanism design, Blockchains, Efficient auctions},
abstract = {To allocate transactions to blocks, cryptocurrencies use auction-esque transaction fee mechanisms (TFMs). A conjecture of Roughgarden (2024) asks whether there is a TFM that is incentive compatible for both users and the miner and is also resistant to a collusion that allows them to jointly increase overall welfare. We resolve this question in the negative for deterministic TFMs, through a comprehensive characterization of designs robust to this form of collusion. For randomized TFMs, we show limits on the efficiency they can achieve.}
}

@inproceedings{roughgarden2021tfm, author = {Roughgarden, Tim}, title = {Transaction Fee Mechanism Design}, year = {2021}, isbn = {9781450385541}, publisher = {Association for Computing Machinery}, address = {New York, NY, USA}, url = {https://doi.org/10.1145/3465456.3467591}, doi = {10.1145/3465456.3467591}, abstract = {Demand for blockchains such as Bitcoin and Ethereum is far larger than supply, necessitating a mechanism that selects a subset of transactions to include "on-chain" from the pool of all pending transactions. EIP-1559 is a proposal to make several tightly coupled changes to the Ethereum blockchain's transaction fee mechanism, including the introduction of variable-size blocks and a burned base fee that rises and falls with demand. These changes are slated for deployment in Ethereum's "London fork," scheduled for late summer 2021, at which point it will be the biggest economic change made to a major blockchain to date. The first goal of this paper is to formalize the problem of designing a transaction fee mechanism, taking into account the many idiosyncrasies of the blockchain setting (ranging from off-chain collusion between miners and users to the ease of money-burning). The second goal is to situate the specific mechanism proposed in EIP-1559 in this framework and rigorously interrogate its game-theoretic properties. The third goal is to suggest competing designs that offer alternative sets of trade-offs. The final goal is to highlight research opportunities for the EC community that could help shape the future of blockchain transaction fee mechanisms.}, booktitle = {Proceedings of the 22nd ACM Conference on Economics and Computation}, pages = {792}, numpages = {1}, keywords = {blockchains, cryptocurrencies, ethereum, mechanism design}, location = {Budapest, Hungary}, series = {EC '21} }

@misc{roughgarden2020eip1559,
    title = {Transaction Fee Mechanism Design for the Ethereum Blockchain: An Economic Analysis of {EIP}-1559},
    author = {Tim Roughgarden},
    year = {2020},
    url = {https://timroughgarden.org/papers/eip1559.pdf}
}

@misc{chung2022foundations,
      title={Foundations of Transaction Fee Mechanism Design}, 
      author={Hao Chung and Elaine Shi},
      year={2022},
      eprint={2111.03151},
      archivePrefix={arXiv},
      primaryClass={cs.GT},
      url={https://arxiv.org/abs/2111.03151}, 
}

@misc{ma2024uncrowdable,
    title = {Uncrowdable Inclusion Lists: The Tension between Chain Neutrality, Preconfirmations and Proposer Commitments},
    author = {Julian Ma and Barnabé Monnot and Thomas Thiery},
    year = {2024},
    url = {https://ethresear.ch/t/uncrowdable-inclusion-lists-the-tension-between-chain-neutrality-preconfirmations-and-proposer-commitments/19372}
}

@misc{cryptoeprint:2025/194,
      author = {Sarisht Wadhwa and Julian Ma and Thomas Thiery and Barnabe Monnot and Luca Zanolini and Fan Zhang and Kartik Nayak},
      title = {{AUCIL}: An Inclusion List Design for Rational Parties},
      howpublished = {Cryptology {ePrint} Archive, Paper 2025/194},
      year = {2025},
      url = {https://eprint.iacr.org/2025/194}
}

@misc{roughgarden2023transactionfeemechanismdesign,
      title={Transaction Fee Mechanism Design}, 
      author={Tim Roughgarden},
      year={2023},
      eprint={2106.01340},
      archivePrefix={arXiv},
      primaryClass={cs.CR},
      url={https://arxiv.org/abs/2106.01340}, 
}

@misc{fox2023censorshipresistanceonchainauctions,
      title={Censorship Resistance in On-Chain Auctions}, 
      author={Elijah Fox and Mallesh Pai and Max Resnick},
      year={2023},
      eprint={2301.13321},
      archivePrefix={arXiv},
      primaryClass={econ.TH},
      url={https://arxiv.org/abs/2301.13321}, 
}

@article{Cullen2024APP,
  title={A Purely Posted-Price Transaction Fee Mechanism for Leaderless Blockchains},
  author={Andrew Cullen and Darcy Camargo and Luigi Vigneri},
  journal={2024 6th Conference on Blockchain Research \& Applications for Innovative Networks and Services (BRAINS)},
  year={2024},
  pages={1-8},
  url={https://api.semanticscholar.org/CorpusID:273902575}
}

@misc{garimidi2025transactionfeemechanismdesign,
      title={Transaction Fee Mechanism Design for Leaderless Blockchain Protocols}, 
      author={Pranav Garimidi and Lioba Heimbach and Tim Roughgarden},
      year={2025},
      eprint={2505.17885},
      archivePrefix={arXiv},
      primaryClass={cs.GT},
      url={https://arxiv.org/abs/2505.17885}, 
}

@inproceedings{10.1145/3658644.3670330,
author = {Karakostas, Dimitris and Kiayias, Aggelos and Zacharias, Thomas},
title = {Blockchain Bribing Attacks and the Efficacy of Counterincentives},
year = {2024},
isbn = {9798400706363},
publisher = {Association for Computing Machinery},
address = {New York, NY, USA},
url = {https://doi.org/10.1145/3658644.3670330},
doi = {10.1145/3658644.3670330},
abstract = {We analyze bribing attacks in Proof-of-Stake distributed ledgers from a game theoretic perspective. In bribing attacks, an adversary offers participants a reward in exchange for instructing them how to behave, with the goal of attacking the protocol's properties. Specifically, our work focuses on adversaries that target blockchain safety. We consider two types of bribing, depending on how the bribes are awarded: i) guided bribing, where the bribe is given as long as the bribed party behaves as instructed; ii) effective bribing, where bribes are conditional on the attack's success, w.r.t. well-defined metrics. We analyze each type of attack in a game theoretic setting and identify relevant equilibria. In guided bribing, we show that the protocol is not an equilibrium and then describe good equilibria, where the attack is unsuccessful, and a negative one, where all parties are bribed such that the attack succeeds. In effective bribing, we show that both the protocol and the "all bribed" setting are equilibria. Using the identified equilibria, we then compute bounds on the Prices of Stability and Anarchy. Our results indicate that additional mitigations are needed for guided bribing, so our analysis concludes with incentive-based mitigation techniques, namely slashing and dilution. Here, we present two positive results, that both render the protocol an equilibrium and achieve maximal welfare for all parties, and a negative result, wherein an attack becomes more plausible if it severely affects the ledger's token's market price.},
booktitle = {Proceedings of the 2024 on ACM SIGSAC Conference on Computer and Communications Security},
pages = {1031–1045},
numpages = {15},
keywords = {blockchain, bribing, decentralization, penalties, slashing},
location = {Salt Lake City, UT, USA},
series = {CCS '24}
}

@techreport{buterin2019eip1559,
  author      = {Vitalik Buterin and Eric Conner and Rick Dudley and Matthew Slipper and Ian Norden and Abdelhamid Bakhta},
  title       = {EIP-1559: Fee market change for ETH 1.0 chain},
  institution = {Ethereum Improvement Proposals},
  number      = {1559},
  month       = {April},
  year        = {2019},
  url         = {https://eips.ethereum.org/EIPS/eip-1559}
}

@inproceedings{10.1145/3589334.3645431, author = {Wahrst\"{a}tter, Anton and Ernstberger, Jens and Yaish, Aviv and Zhou, Liyi and Qin, Kaihua and Tsuchiya, Taro and Steinhorst, Sebastian and Svetinovic, Davor and Christin, Nicolas and Barczentewicz, Mikolaj and Gervais, Arthur}, title = {Blockchain Censorship}, year = {2024}, isbn = {9798400701719}, publisher = {Association for Computing Machinery}, address = {New York, NY, USA}, url = {https://doi.org/10.1145/3589334.3645431}, doi = {10.1145/3589334.3645431}, abstract = {Permissionless blockchains promise resilience against censorship by a single entity. This suggests that deterministic rules, not third-party actors, decide whether a transaction is appended to the blockchain. In 2022, the U.S. \O{}FAC sanctioned a Bitcoin mixer and an Ethereum application, challenging the neutrality of permissionless blockchains. In this paper, we formalize, quantify, and analyze the security impact of blockchain censorship. We start by defining censorship, followed by a quantitative assessment of current censorship practices. We find that 46\% of Ethereum blocks were made by censoring actors complying with OFAC sanctions, indicating the significant impact of OFAC sanctions on the neutrality of public blockchains. We discover that censorship affects not only neutrality but also security. After Ethereum's transition to ¶oS, censored transactions faced an average delay of 85\%, compromising their security and strengthening sandwich adversaries.}, booktitle = {Proceedings of the ACM Web Conference 2024}, pages = {1632–1643}, numpages = {12}, keywords = {bitcoin, blockchain, censorship, ethereum}, location = {Singapore, Singapore}, series = {WWW '24} }

@misc{includers,
    title = {Towards Attester-Includer Separation},
    author = {Thomas Thiery},
    year = {2024},
    url = {https://ethresear.ch/t/towards-attester-includer-separation/21306},
}

@misc{PBS,
  author = {{ethereum.org}},
  year   = {2024},
  month  = {February},
  title  = {Proposer builder separation},
  howpublished = {\url{https://ethereum.org/en/roadmap/pbs/}},
}

@misc{berger2025fraudproofs,
      title={Economic Censorship Games in Fraud Proofs}, 
      author={Ben Berger and Edward W. Felten and Akaki Mamageishvili and Benny Sudakov},
      year={2025},
      eprint={2502.20334},
      archivePrefix={arXiv},
      primaryClass={cs.GT},
      url={https://arxiv.org/abs/2502.20334}, 
}

@InProceedings{10.1007/978-3-032-07035-7_2,
author="Garimidi, Pranav
and Heimbach, Lioba
and Roughgarden, Tim",
editor="Garman, Christina
and Moreno-Sanchez, Pedro",
title="Transaction Fee Mechanism Design for Leaderless Blockchain Protocols",
booktitle="Financial Cryptography and Data Security",
year="2026",
publisher="Springer Nature Switzerland",
address="Cham",
pages="20--35",
abstract="We initiate the study of transaction fee mechanism design for blockchain protocols in which multiple block producers contribute to the production of each block. Our contributions include:We propose an extensive-form (multi-stage) game model to reason about the game theory of multi-proposer transaction fee mechanisms.We define the strongly BPIC property to capture the idea that all block producers should be motivated to behave as intended: for every user bid profile, following the intended allocation rule is a Nash equilibrium for block producers that Pareto dominates all other Nash equilibria.We propose the first-price auction with equal sharing (FPA-EQ) mechanism as an attractive solution to the multi-proposer transaction fee mechanism design problem. We prove that the mechanism is strongly BPIC and guarantees at least a 63.2{\%} fraction of the maximum-possible expected welfare at equilibrium.We prove that the compromises made by the FPA-EQ mechanism are qualitatively necessary: no strongly BPIC mechanism with non-trivial welfare guarantees can be DSIC, and no strongly BPIC mechanism can guarantee optimal welfare at equilibrium.",
isbn="978-3-032-07035-7"
}

\appendix

\section{Notation}
We summarize the notation used in the following sections in Figure \ref{notation}.
 \begin{figure*}[h]

\begin{thickframe}
\begin{mdframed}

\underline{\textit{Notation}} 
\begin{itemize}
\item Let $t_0$ be the target transaction that the external briber seeks to censor by offering bribes to the includers and/or block producer.
\item Let $c_{Incl}:= \lfloor C_{Incl}/s \rfloor$ be the maximum number of transactions an inclusion list can store.
\item  Let $c_{block}:=\lfloor C_{block}/s \rfloor$ be the maximum number of transactions a block can store.
Let $r:=q_t(H,B_k)=q_t(H)$ be the burning fee per unit of size for a transaction $t$ included in $B_k$ according to EIP-1559. Recall that $H$ is the history of blocks.
\end{itemize}

\begin{itemize}

\item Let $w$ be the number of transactions in $M_0$.
\item Given $M_0$, we define the following two ordered lists of transactions:
\begin{itemize}
    \item $L_{BP}$: This list consists of the transactions in $M_0$ that have block producer fee no smaller than $\mu^{Cost}_{BP} \cdot s$. This list is ordered by the block producer fee. The ordering is decreasing. The ties break according to a deterministic rule. The  block producer fee corresponding to position $j$ is denoted by $f_{j,BP}$. If there is no position $j$ in the list, we consider that $f_{j,BP}=0$.
    \item $L_{CM,c_{block}}$: This list consists of the transactions in $M_0$ that (i) belong to the first $c_{block}$ positions in $L_{BP}$ (or to $L_{BP}$ if $L_{BP}$  has fewer than $c_{block}$ positions) (ii) have committee fee no smaller than $\mu^{Cost}_{CM}\cdot s$. This list is ordered by the committee fee. The ordering is decreasing. The ties break according to a deterministic rule. The committee fee corresponding to position $j$ is denoted by $f_{j,CM}$.
    \end{itemize}
    We assume that $t_0$ belongs to the first $\min \{ c_{Incl} \cdot m, size_{L_{CM, c_{block}}} \}$ positions in $L_{CM,c_{block}}$, where $size_{L_{CM, c_{block}}}$ is the size of $L_{CM,c_{block}}$. We adopt this assumption because, under the allocation rule defined below, any transaction that fails to meet these requirements would not be included in either the inclusion list or the block, even in the absence of a bribe.
\item Let $size_{L_{BP}}$ be the size of $L_{BP}$.
\item  Let $sum_{max,c_{block}}$be the sum of the rewards (block producer fee minus $\mu^{Cost}_{BP}\cdot s$ per transaction) the block producer will receive if they include the first $\min \{ c_{block}, size_{L_{BP}} \}$ transactions of $L_{BP}$ in their block.

\item Let $o$ be the order of $t_0$ in $L_{CM,c_{block}}$ and $o_{BP}$ be the order of $t_0$ in $L_{BP}$.
\item Let $f_{CM}, f_{BP}$ be the committee and the block producer fee of $t_0$ respectively.
\end{itemize}
\end{mdframed}
\end{thickframe}
\caption{Notation}
\label{notation}
\end{figure*}
\section{Double TFM}
\label{double}

\subsubsection{Formal Definition of Double TFM}
\label{bribefunction}
Let us first define the parameters, bribe functions and beliefs for the Double TFM. We assume an arbitrary parameter $\gamma$ related to the probability that the block producer is not the creator of the block.
\paragraph{Sets of candidate bribe functions $\text{Bribe}^{CM}, \text{Bribe}^{BP}$} 
For the formal definitions cf. Figure \ref{set}, \ref{set_2}. At a high level, when $w > c_{block}$ (which means that there is congestion), the block producer can censor the target transaction $t_0$ without risking that the attesters will ignore their block, simply by replacing it with another transaction in $M_0$, potentially with a lower block producer fee. Thus, in this case, the block producer does not have incentives to censor $t_0$ if the bribe is less than the difference between the block producer fee offered by $t_0$ and that of the transaction that replaces it.
\par When $w < c_{block}$, the cost for the block producer to omit $t_0$---and thus the bribe required to induce deviation---is higher and determined by the following cases: The block producer can omit $t_0$ without filling the block with fake transactions but they will lose all block rewards as their block will be ignored by the attesters. Alternatively, the block producer can fill their block with fake transactions to avoid their block being ignored by attesters but they will pay the burning fee $r \cdot s$ for every fake transaction they add to the block. Finally, they can add fake transactions to the mempool to make the includers ignore the target transaction. In this case they need to perform and pay the following:
\begin{itemize}
    \item  Assign a committee fee to the fake transactions that is sufficiently high to render them more profitable for includers than the target transaction.
    \item  Assign a block producer fee to the fake transactions such that includers are led to believe they will be included in the block. Note that includers have no incentive to include transactions that are unlikely to be selected by the block producer, as they would receive no fees from such transactions.
    \item  Add a fake transaction to the block to invalidate the previously inserted fake transactions in the mempool, thereby avoiding payment of the committee fee and the block producer fee they had set. They still pay a $\gamma$ fraction of these fees to account for the possibility that the block producer may not be the proposer for the current slot, in which case the transactions could be included in the subsequent block. The number of fake transactions that must be added to the block in order to invalidate those previously placed in the mempool---and thus the incurred cost---depends on whether multiple inclusion list entries from the same sender are permitted. 
        
    \end{itemize}


\begin{figure*}[h]
\begin{thickframe}
\begin{mdframed}
\underline{\textit{Set $\text{Bribe}^{CM}$:}} 
The $\text{Bribe}^{CM}$consists of the following three functions.

\begin{itemize}
        \item $\mu^{Bribe}_{1,CM}$: This function gives to the includer $f_{CM}- \max \{f_{g,CM},\mu^{Cost}_{CM}\cdot s \}$, where $g=c_{Incl} \cdot \lceil \frac{o}{c_{Incl}} \rceil +1$  if they omit this transaction (regardless of the strategy of the other includers). 
        \textit{Comment: Note that $g$ reflects the next most profitable transaction that can replace the censored one.}
        \item  $\mu^{Bribe}_{2,CM}$: This function gives to the includer $f_{CM}- \max \{f_{g,CM},\mu^{Cost}_{CM}\cdot s \}$, where $g=c_{Incl} \cdot \lceil \frac{o}{c_{Incl}}\rceil +1$ if they omit this transaction and all the other includers and the block producer do the same.
        \item  $\mu^{Bribe}_{3,CM}$: This function gives an $X$ to the includer if they omit this transaction (regardless of the strategy of the other includers). $X$ can be any non negative real number. 
\end{itemize}
\end{mdframed}
\end{thickframe}
\caption{Set $\text{Bribe}^{CM}$ }
\label{set}
\end{figure*}

\begin{figure*}[h]
\begin{thickframe}
\begin{mdframed}
\underline{\textit{Set $\text{Bribe}^{BP}$}:} 
      $\text{Bribe}^{BP}$consists of the following bribe function $\mu^{Bribe}_{1,BP}$.
\begin{itemize}
        \item When $w>c_{block}$: This function gives to the block producer $f_{BP}-\max \{f_{c_{block}+1,BP}, \mu^{Cost}_{BP}\cdot s \}$ if they omit the transaction regardless of the strategy of the includers.  
        
      \end{itemize}

    \begin{itemize}

        \item When $w \leq c_{block}$ and at most one transaction per sender is allowed to be added to an inclusion list, the function $\mu^{Bribe}_{1,BP}$ gives to the block producer: 
                \begin{align*}
                & \min \{ f_{BP}-\mu^{Cost}_{BP}  \cdot s +r \cdot s \cdot  (c_{block}-w+1),  sum_{max,c_{block}},   f_{BP} -\mu^{Cost}_{BP} \cdot s \\&+(c_{block} - o_{BP}+1) \cdot  \gamma  \cdot f_{BP}  +\lceil\frac{(c_{block} - o_{BP}+1)}{m}\rceil \cdot  r \cdot s , \\&  f_{BP} -\mu^{Cost}_{BP} \cdot s+  \gamma \cdot   (\min \{m \cdot c_{Incl}, size_{L_{CM, c_{block}}}\} -o+1) \cdot \\& (f_{CM} + \max \{f_{c_{block}-(\min \{m \cdot c_{Incl}, size_{L_{CM, c_{block}}}\} -o+1)+1,BP},  \mu^{Cost}_{BP}   \cdot s \} ) \\& + \lceil \frac{(\min \{m \cdot c_{Incl}, size_{L_{CM, c_{block}}}\} -o+1) }{m}\rceil \cdot r \cdot s  \}
                \end{align*}
\textit{Comment: The first term reflects the scenario in which the block producer fills the block with fake transactions and pays the burning fee. The second term captures the case where the block producer omits $t_0$ despite having space in the block; as a result, the block is ignored by attesters, and the producer loses block rewards. The third term reflects the scenario in which the block producer adds fake transactions to the mempool to make includers prefer them over the target transaction, and later invalidates them via another transaction in their block.
Note that fake transactions originating from the same sender can be invalidated by a single transaction, whereas fake transactions from different senders require one transaction per sender to be invalidated. Thus, in the case where at most one transaction per sender is allowed in an inclusion list, the block producer can refrain from censorship even if it receives slightly higher bribes, compared to the case where multiple transactions per sender are allowed.}
     \item When $w \leq c_{block}$ and multiple transactions per sender are allowed to be added to an inclusion list, the function $\mu^{Bribe}_{1,BP}$ gives to the block producer:
    
    \begin{align*}
    &\min \{ f_{BP}-\mu^{Cost}_{BP}  \cdot s +r \cdot s \cdot  (c_{block}-w+1),  sum_{max,c_{block}},  f_{BP} -\mu^{Cost}_{BP} \cdot s \\&+(c_{block}- o_{BP}+1) \cdot  \gamma  \cdot  f_{BP}  +r \cdot s , \\& f_{BP} -\mu^{Cost}_{BP} \cdot s+ \gamma \cdot  (\min \{m \cdot c_{Incl}, size_{L_{CM, c_{block}}}\} -o+1) \cdot \\& (f_{CM} + \max \{f_{c_{block}-(\min \{m \cdot c_{Incl}, size_{L_{CM, c_{block}}}\} -o+1)+1,BP},\mu^{Cost}_{BP}  \cdot s \} ) + r \cdot s  \}
    \end{align*}
    \end{itemize}

\end{mdframed}
\end{thickframe}
\caption{ Set $\text{Bribe}^{BP}$}
\label{set_2}
\end{figure*}

\paragraph{Types of the includers and the block producer} The block producer is of type $\mu^{Bribe}_{1,BP}$. The includer with order $\lceil \frac{o}{c_{Incl}}\rceil$ is of type $\mu^{Bribe}_{1,CM}$ or of type $\mu^{Bribe}_{2,CM}$. The other includers can be of type $\mu^{Bribe}_{1,CM}$ , $\mu^{Bribe}_{2,CM}$  or $\mu^{Bribe}_{3,CM}$.

Note that if the allocation rule of this TFM are followed by all includers, the includer with order $\lceil \frac{o}{c_{Incl}}\rceil$ would be the one who would include $t_0$ in their inclusion list.
\paragraph{ Beliefs of the parties about the bribe functions of the other parties} The beliefs of the users for the other users are arbitrary. The users and the includers know that the type of the block producer is $\mu^{Bribe}_{1,BP}$ with probability $1$. Every user and every includer believes that the every other includer is of type $\mu^{Bribe}_{1,CM}$ , $\mu^{Bribe}_{2,CM}$  or $\mu^{Bribe}_{3,CM}$ with arbitrary probability.
\paragraph{Double TFM}

The \textit{Double TFM} under the above parameters, bribe functions and beliefs is defined as follows:

\begin{align*}
&(\{x^{BP,\mu^{Bribe}_{BP}}\}_{ \mu^{Bribe}_{BP}\in \text{Bribe}^{BP}}, \\&  \{x^{j, \mu^{Bribe}_{CM_j}}\}_{\mu^{Bribe}_{CM_j}\in \text{Bribe}^{CM}},  p^{CM} p^{BP}, q)
\end{align*}

where: 
\begin{itemize}

\item  $\mu^{Bribe}_{BP}$ is always equal to $\mu^{Bribe}_{1,BP}$ defined above.
 \item  For every includer $j$, $\mu^{Bribe}_{CM_j}$ is equal to:
 \begin{itemize}
    \item  $\mu^{Bribe}_{1,CM}$ or $\mu^{Bribe}_{2,CM}$ if the order of the includer is equal to $\lceil \frac{o}{c_{Incl}}\rceil$.
    \item $\mu^{Bribe}_{1,CM}$ or $\mu^{Bribe}_{2,CM}$  or $\mu^{Bribe}_{3,CM}$ otherwise.
\end{itemize}
\item  Burning rule $q$ is the Ethereum’s burning rule (EIP-1559).
\item  $p^{CM}$: For every transaction not included in the block, they receive zero fees. For every transaction $t$ with a bid $b_t=(\delta^{CM}_t,\delta^{BP}_t,c_t)$ added to the block, the includer is paid as follows: If this includer is the member with the smallest order who has added this transaction to their inclusion list, they receive the entire committee fee-namely $\max \{\min \{ \delta_t^{CM}\cdot s, c_t \cdot s-r \cdot s- y\},0\}$, where $y=\max \{\min \{ \delta_t^{BP} \cdot s, c_t\cdot s -r \cdot s\},0\}$. Otherwise, they receive zero fees. 
\item  $p^{BP}$:  For every transaction included in their block, they receive the block producer fee, regardless of whether this transaction has been included in an inclusion list. Recall that this fee is equal to $y=\max \{\min \{ \delta_t^{BP} \cdot s, c_t\cdot s -r \cdot s\},0\}$.

\item  For the type in  $\text{Bribe}^{BP}$, \textit{the allocation rule for the block producer} is the following: They select transactions satisfying

 ( $c_t \geq r +\mu^{Cost}_{BP}, \delta_t^{BP}\geq \mu^{Cost}_{BP}$ ), prioritizing those with the highest block producer fee until they make their block full or there are no other available transactions. 

\item  For every type in $\text{Bribe}^{CM}$, \textit{the allocation rule of every includer} is the following: \par

Every includer with order $j$ (the best order is $1$) chooses the following deterministic algorithm: First, the includer computes the set of transactions that a block producer, adhering to the specified allocation rule, would include in their block. The includer then removes from the mempool all transactions not belonging to this set. Additionally, the includer excludes any transactions offering a committee fee lower than $\mu^{Cost}_{CM}\cdot s$.  After that the includer:
(i) computes the subset of transactions from the mempool $M$ that maximise the utility of includer with order $1$, if they are included in their inclusion list. As we have assumed that every transaction has the same size, this corresponds to the set of transactions offering the highest committee fees; (ii) removes these transactions from the mempool $M$ and repeats the same procedure for the includers with order $2, \dots, j-1$, and (iii) selects the transactions that yield the highest committee fees for inclusion.

\end{itemize}
Note that under this allocation rule, only one includer includes each transaction. However, this is sufficient to ensure censorship resistance, as a block with empty space is disregarded by the attesters if it omits a transaction that appeared in even a single transaction list.
\subsubsection{Properties of Double TFM}
Recall that current Ethereum's TFM  (EIP-1559) is proved to be \textit{``usually'' DSIC} in \cite{roughgarden2021tfm}, 
which means that it is DSIC under the following assumptions: (i) the users cannot overbid meaning they do not submit a bid higher than their transaction's value and (ii) the burning fee per unit of size for this slot is not excessively low meaning that the number of transactions with values higher than the base fee plus block producer cost to include the transaction is less than the maximum number of transactions in a block. 

\begin{theorem} Assume a history of block $H$ and a set of transactions $T$ of the same size. If (i) the users cannot overbid meaning they do not submit a bid higher than their transaction's value and (ii) the burning fee per unit of size $r$ is not excessively low (meaning that the number of transactions with values higher than $r+\mu^{Cost}_{BP}$ is less than the maximum number of transactions in a block) then: assuming that all the types of includers and the block producer adhere to the indicated allocation rules, the following bidding strategy for a transaction $t$: $b_t=(\delta^{CM}_t,\delta^{BP}_t,c_t)=(0, \mu^{Cost}_{BP}, \min\{v_t, r+\mu^{Cost}_{BP} \})$ constitutes a dominant strategy for every user in Double TFM, irrespective of their beliefs.
\end{theorem}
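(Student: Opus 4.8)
The plan is to prove dominance by establishing a tight upper bound on the user's achievable utility that holds for every bid and every configuration of the other parties, and then showing the proposed bid $b_t=(0,\mu^{Cost}_{BP},\min\{v_t,r+\mu^{Cost}_{BP}\})$ attains it. Throughout I fix an arbitrary history $H$, an arbitrary profile of bids for the other transactions (respecting the no-overbidding assumption $c_{t'}\le v_{t'}$), and use that all includers and the block producer follow the indicated allocation rules. The argument splits on whether $v_t\ge r+\mu^{Cost}_{BP}$.

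First I would compute the outcome of the proposed bid. Since $\delta^{CM}_t=0$ the committee fee is $0$, so the user never pays the committee and the inclusion-list status of $t$ is irrelevant to its utility; only whether the block producer includes $t$ matters. If $v_t\ge r+\mu^{Cost}_{BP}$ then $c_t=r+\mu^{Cost}_{BP}$, the block-producer fee evaluates to exactly $\mu^{Cost}_{BP}\cdot s$, and both admissibility conditions $c_t\ge r+\mu^{Cost}_{BP}$ and $\delta^{BP}_t\ge\mu^{Cost}_{BP}$ hold, so $t$ is a candidate for the block producer. If instead $v_t<r+\mu^{Cost}_{BP}$ then $c_t=v_t<r+\mu^{Cost}_{BP}$, the condition $c_t\ge r+\mu^{Cost}_{BP}$ fails, $t$ is never selected, and the utility is $0$.

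Next I would derive the upper bound. If $t\notin B_k$ the utility is $0$. If $t\in B_k$ then, since the block producer follows the indicated rule and includers only list transactions the block producer would already include (so inclusion-list membership cannot force a non-candidate into the block), $t$ must satisfy $\delta^{BP}_t\ge\mu^{Cost}_{BP}$ and $c_t\ge r+\mu^{Cost}_{BP}$; hence the block-producer fee per unit is $\max\{\min\{\delta^{BP}_t,c_t-r\},0\}\ge\mu^{Cost}_{BP}$, the committee fee per unit is nonnegative, and the burn is $r$, so the total payment per unit is at least $r+\mu^{Cost}_{BP}$ and the utility is at most $(v_t-r-\mu^{Cost}_{BP})\cdot s$. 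Thus every bid yields utility at most $\max\{0,(v_t-r-\mu^{Cost}_{BP})\cdot s\}$. When $v_t<r+\mu^{Cost}_{BP}$ this maximum is $0$, attained by the proposed bid, so dominance holds in that case. When $v_t\ge r+\mu^{Cost}_{BP}$ the bound equals $(v_t-r-\mu^{Cost}_{BP})\cdot s$, and it remains only to show the proposed bid gets $t$ included.

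The main obstacle, and the only place both hypotheses are needed, is showing that offering merely the minimal block-producer fee $\mu^{Cost}_{BP}\cdot s$ still guarantees inclusion. Here I would use a no-congestion argument: every transaction the block producer admits satisfies $c_{t'}\ge r+\mu^{Cost}_{BP}\ge r$, and by no-overbidding $v_{t'}\ge c_{t'}\ge r$, so the entire candidate set is contained in $\{t':v_{t'}\ge r\}$. By the ``burning fee not excessively low'' hypothesis the total size of $\{t':v_{t'}\ge r\}$ does not exceed $C_{Block}$, so all candidates fit simultaneously; the allocation rule therefore exhausts the candidate set before the block fills and includes every candidate, in particular $t$. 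This gives utility exactly $(v_t-r-\mu^{Cost}_{BP})\cdot s$, matching the upper bound. Since $H$, the other bids, and all admissible types and beliefs were arbitrary, the proposed bid is a dominant strategy for every user.
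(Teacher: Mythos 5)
Your proof is correct and follows essentially the same route as the paper's: both rest on the observation that no-overbidding plus the not-excessively-low base fee eliminate congestion among admissible transactions, so the indicated block producer rule includes a transaction exactly when $\delta^{BP}_t \geq \mu^{Cost}_{BP}$ and $c_t \geq r + \mu^{Cost}_{BP}$, followed by the same case split on $v_t$ versus $r+\mu^{Cost}_{BP}$. Your packaging as a global upper bound of $\max\{0,(v_t-r-\mu^{Cost}_{BP})\cdot s\}$ attained by the proposed bid is just a cleaner reorganization of the paper's no-profitable-deviation argument, with the helpful extra step of noting explicitly that includers never force a non-candidate into the block.
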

\begin{proof} As all the types of includers and block producer adhere to the indicated allocation rule, they ignore bribes. Moreover, all the types of includers and the types of the block producer follow the same indicated allocation rule. Thus the beliefs of the user for the types of includers and the block producer do not affect their utility.

 As the base fee is not excessively low, transactions with values higher than $\mu^{Cost}_{BP} +r$ are fewer than $c_{block}$. Moreover, as the users do not overbid, transactions with $c_t>r+\mu^{Cost}_{BP}$ are also fewer than $c_{block}.$ This means that a block producer who follows the indicated allocation rule, regardless of their type, will include a transaction in their block \textit{if and only if }it holds  $(\delta_t^{BP} \geq \mu^{Cost}_{BP}$  and $c_t \geq \mu^{Cost}_{BP} +r$ ). Furthermore, as the includers follow the indicated rule, they include a transaction in their inclusion list \textit{only if }the block producer fee is higher than $\mu^{Cost}_{BP} \cdot s$. 

We have the following cases:
\begin{itemize}
    \item  $\min\{v_t, r+\mu^{Cost}_{BP} \}=r+\mu^{Cost}_{BP}$ : The current utility of a user who bids $b_t=(\delta^{CM}_t,\delta^{BP}_t,c_t)=(0, \mu^{Cost}_{BP} , \min\{v_t, r+\mu^{Cost}_{BP}\})$  in this case is $v_t-r-\mu^{Cost}_{BP} \geq 0$ regardless of their beliefs, because their transaction will be included in the block by the block producer irrespective of the bids of the other users and the type of the block producer. The utility of the user when their transaction is not included in the block is $0$. Thus, the user can increase their utility only if they can set a lower $c_t$ and make their transaction still included in the block. The burning fee is $r$ which means that regardless of their choice of $\delta^{CM}_t,\delta^{BP}_t$, if they decrease $c_t$, the block producer fee becomes lower than $\mu^{Cost}_{BP}\cdot s$. This means that this transaction will not be included in the block.
\item  $\min\{v_t, r+\mu^{Cost}_{BP}\}=v_t$. The current utility of the user in this case is zero because the transaction will not be included in the block regardless of the type of the block producer and the other bids. Note that the block producer fee of this transaction is lower than $\mu^{Cost}_{BP}\cdot s$.  The only way for the user’s transaction to be included in the block is if the user increases $c_t$. However, if the user increases $c_t$, then their utility will become negative if this transaction is included in the block.
\end{itemize}
As a result bid $b_t$ maximises the user’s utility regardless of the other bids and their beliefs for the includers and the block producer.
\end{proof}
\begin{theorem}Double TFM is Myopic Committee Bayesian-Nash Incentive Compatible (MCBN), under the bribe functions and beliefs specified above. \end{theorem}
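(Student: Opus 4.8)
The plan is to fix an arbitrary history $H$, user-transaction set $M_0$, and a single type of includer $j$, and to show that following the indicated allocation with $F_j=F^{Init}_j=\emptyset$ weakly maximizes $j$'s utility against the hypothesised behaviour of everyone else. The first move is a reduction. Under the MCBN hypothesis every other includer and the block producer follow the indicated allocation and submit no fakes, so the candidate set, the block producer's selection, and hence the block $B_k$ are deterministic functions of the mempool $M$ that $j$ faces; consequently $j$'s utility is independent of the realised types of the other parties, the Bayesian expectation over $\text{Belief}_{CM}$ collapses to a single number, and it suffices to compare deterministic utilities. I would then record two structural facts: (i) from $p^{CM}$ and the includer utility, $j$ is paid the committee fee of a transaction $t$ only when $t\in B_k$ and $j$ is the smallest-order includer that listed $t$, whereas any transaction in $F_j\cup F^{Init}_j$ earns $j$ no committee fee and, if it lands in $B_k$, costs $j$ the burning fee plus the block-producer fee plus the other includers' committee fees in full (no $\gamma$ discount, unlike the block producer); and (ii) because $t_0\in L_{CM,c_{block}}$ forces $t_0$ into the block producer's top-$c_{block}$ transactions by block-producer fee, the block producer includes $t_0$ independently of the inclusion lists, as long as $M$ is not altered by fakes.

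Next I would establish honest revenue-optimality in the no-fakes regime. Transactions outside the block producer's selected set never enter $B_k$, and transactions with committee fee below $\mu^{Cost}_{CM}\cdot s$ are net-negative, so discarding them loses nothing. Because the other includers take, in decreasing committee-fee order, exactly the positions outside $j$'s own block of $c_{Incl}$ slots, includer $j$ collects the fee of a listed transaction precisely when no smaller-order includer has claimed it. Hence the greedy rule of listing the $c_{Incl}$ highest committee fees not already claimed by includers $1,\dots,j-1$ is optimal: duplicating a smaller-order includer's choice yields nothing, and reaching into a larger-order includer's block only swaps a higher fee for a lower one.

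The crux is showing the calibrated bribe never makes deviating on $t_0$ attractive. Under the indicated rule $t_0$ is listed only by the designated includer $j^{*}=\lceil o/c_{Incl}\rceil$, so I first analyse $j^{*}$'s temptation to drop $t_0$. If $j^{*}$ drops $t_0$, its best refill is position $g=c_{Incl}\lceil o/c_{Incl}\rceil+1$, worth $\max\{f_{g,CM}-\mu^{Cost}_{CM}\cdot s,\,0\}$, while it forfeits the $t_0$ payoff $f_{CM}-\mu^{Cost}_{CM}\cdot s$; since the bribe value $f_{CM}-\max\{f_{g,CM},\mu^{Cost}_{CM}\cdot s\}$ equals exactly this difference, omit-plus-refill-plus-bribe ties with honest inclusion. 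For type $\mu^{Bribe}_{1,CM}$ the bribe is unconditional on the others, so this tie already certifies honest inclusion as a weak best response. For type $\mu^{Bribe}_{2,CM}$ the bribe pays only if the block producer also omits $t_0$; since the block producer includes $t_0$ by default, unlocking the bribe forces $j^{*}$ to insert fakes displacing $t_0$ from the top-$c_{block}$, each of which, landing in $B_k$, is charged to $j^{*}$ in full, so the same accounting yields a net change of exactly minus the strictly positive cost of those fakes, making honest inclusion the strict best response. Finally, a non-designated includer does not list $t_0$ under the indicated rule and so already collects its bribe under all three types; listing $t_0$ instead earns nothing if its order exceeds $j^{*}$ (which already claims the fee), and merely swaps a higher-committee-fee transaction for $t_0$ while forfeiting the bribe if its order is below $j^{*}$, so neither helps.

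To close, since a fake never earns $j$ a committee fee and strictly costs $j$ whenever it enters $B_k$, its only conceivable purpose is to censor $t_0$ and unlock the type-$\mu^{Bribe}_{2,CM}$ bribe, which the previous step shows is strictly unprofitable; thus $F_j=F^{Init}_j=\emptyset$ together with the indicated allocation maximizes $j$'s deterministic, hence expected, utility for every $H$, $M_0$, and every type in $\text{Bribe}^{CM}$, which is exactly MCBN. I expect the main obstacle to be the bookkeeping of the third paragraph: making the refill and next-slot terms cancel against the calibrated bribe so that omission ties (type $\mu^{Bribe}_{1,CM}$) or strictly loses (type $\mu^{Bribe}_{2,CM}$), and verifying carefully that triggering the type-$\mu^{Bribe}_{2,CM}$ bribe genuinely requires $j^{*}$ to pay for $t_0$-displacing fakes, given that the block producer selects purely by block-producer fee.
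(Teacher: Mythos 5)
Your overall architecture matches the paper's proof almost step for step: collapsing the Bayesian expectation because every type of the other includers and of the block producer plays the same indicated rule, greedy optimality of the list in the no-fakes regime, and the calibrated-bribe accounting at $t_0$, where omit-plus-refill ties against honest inclusion because the bribe $f_{CM}-\max\{f_{g,CM},\mu^{Cost}_{CM}\cdot s\}$ with $g=c_{Incl}\lceil o/c_{Incl}\rceil+1$ exactly cancels the fee difference. Those parts are sound, as is your treatment of type $\mu^{Bribe}_{2,CM}$ via the full (un-discounted, no $\gamma$) payment for $t_0$-displacing fakes that land in $B_k$. (One small inaccuracy: a non-designated includer of type $\mu^{Bribe}_{2,CM}$ does not in fact collect its bribe on path, since the block producer and the designated includer $j^{*}$ include $t_0$; this only strengthens your conclusion there.)

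The genuine gap is your closing step: ``since a fake never earns $j$ a committee fee and strictly costs $j$ whenever it enters $B_k$, its only conceivable purpose is to censor $t_0$.'' That inference is false, and it skips precisely the nontrivial case the paper works out. A mempool fake can also be used to \emph{steal fees}: a fake $t_f$ with committee fee above that of some real transaction $t'$ causes a smaller-order includer (simulating the indicated rule on the altered mempool) to list $t_f$ in place of $t'$, cascading $t'$ down into $j$'s own list, where $j$ collects $f_{t',CM}$. This produces an \emph{indirect gain} that your ``fakes are costly'' observation does not bound; ruling it out requires the quantitative comparison the paper makes: to be preferred, $t_f$ must carry committee fee at least $f_{t',CM}$ and a block-producer fee placing it in the top $c_{block}$ of $L_{BP}$, so $t_f$ lands in $B_k$ and $j$ pays its committee fee (at least $f_{t',CM}$, i.e., at least the maximum possible gain) plus its block-producer fee plus the burning fee, yielding a strict net loss. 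You would also need the companion sub-case the paper handles: if the fake's high \emph{block-producer} fee (rather than committee fee) is what makes a smaller-order includer drop a transaction $t''$, then $t''$ falls outside the top $c_{block}$ of $L_{BP}$, the block producer omits it, and stealing it earns $j$ nothing. Your structural facts contain all the ingredients for these two arguments, but as written the proof asserts rather than proves that no fee-reshuffling deviation profits — and this case is load-bearing: the paper observes in Section \ref{samesize} that once the equal-size assumption is dropped, exactly this stealing deviation \emph{can} become profitable, so it cannot be dismissed on general grounds.
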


\begin{proof} Assume arbitrary $H,M_0$ and that every type of all but one includers and block producer follow the indicated allocation rule and do not add fake transactions. We need to prove that every type of the remaining includer denoted by $j$ cannot increase their utility by deviating from the indicated allocation strategy or by adding fake transactions to their inclusion list or to the mempool. \par 

For all the types of includer $j$, it holds that their beliefs do not affect their utility because all the types of the other includers and the block producer follow the same indicated allocation rule. \par 

Now we prove that the includer cannot increase their utility by deviating from the indicated allocation rule.

 \paragraph{If the order of $j$ is not $\lceil \frac{o}{c_{Incl}}\rceil$}
\begin{itemize}
    \item  $t_0$ is not in their inclusion list when they follow the indicated allocation rule, and thus they do not incur any bribe loss. This means that the bribe function that determines their type does not affect their utility. Note that when all the includers follow the indicated allocation rule, $t_0$ is included in the inclusion list of the member with order $\min \{\lceil \frac{o}{c_{Incl}}\rceil, m\}$.  For example, if $o=10$ and $c_{Incl}=3$, then $t_0$ is included in the inclusion list of the member with order $4$.
    \item  All their current transactions offer them a committee fee no smaller than cost $\mu^{Cost}_{CM}  \cdot s$. Thus, their utility cannot increase by omitting them.
    \item  They cannot add a transaction already included by an includer with a higher (worse) order to their inclusion list, because they have no space. If they choose to replace one (or more) of their current transactions with such a transaction, their utility cannot increase; the members with a worse order include transactions with lower committee fees than their current transactions.
    \item If they choose to add one (or more) transactions already included by an includer with a smaller order, then their utility will decrease because they will take zero fees for this transaction, and they will incur cost $\mu^{Cost}_{CM} \cdot s$. 
    \item If they choose to add a transaction not included by any includer, then their utility will decrease because this transaction either has a very low block producer fee and thus will not be included in the block (which means that it will give them zero fees) or has a lower committee fee.
\end{itemize}
\paragraph{ If the order of $j$ is equal to $\lceil \frac{o}{c_{Incl}}\rceil$}
\begin{itemize}
    \item The includer $j$ incurs bribe loss equal to $f_{CM}- \max \{f_{g,CM},\mu^{Cost}_{CM}\cdot s \}$, where $g=c_{Incl} \cdot \lceil \frac{o}{c_{Incl}} \rceil +1$ , as they have included $t_0$ in their inclusion list.
    \item If they omit $t_0$, their utility will not increase because they will gain the bribe loss plus $\mu^{Cost}_{CM}\cdot s$ , but they will lose $f_{CM}$.
    \item  If they replace $t_0$ with a transaction included by an includer with a smaller order then their utility will decrease, as they will receive zero fees from this transaction.
    \item  If they replace $t_0$ with a transaction included by an includer with a higher order, or a transaction not included in an inclusion list, then the maximum fee they will gain from this transaction will be $\max \{f_{g,CM}, \mu^{Cost}_{CM}\cdot s \}$ and they will lose $f_{CM}$. $f_{CM}- \max \{f_{g,CM}, \mu^{Cost}_{CM}\cdot s\}$ is no smaller than the bribe loss that they would gain. Thus, their utility cannot increase.
    \item  Regarding the other transactions, the proof that their utility cannot increase by deviating from the indicated allocation rule is the same as the case when the order of $j$ is not $\lceil \frac{o}{c_{Incl}}\rceil$.
\end{itemize}

Regardless of their order, includer $j$ cannot increase their utility by adding fake transactions to their inclusion list, because these transactions will give them no extra reward. Moreover, they cannot increase their utility by adding fake transactions to the mempool, because:
\begin{itemize}
\item When this fake transaction does not affect which transactions the other includers  or the block producer include in their inclusion list and block respectively, it cannot affect their utility.
\item In order for the fake transaction to affect which transactions the other includers  or the block producer include in their inclusion list and block respectively, they need to give a committee fee or a block producer fee that will be paid by includer $j$.
\begin{itemize}
    \item  When this fake transaction has a committee fee that makes another includer with a smaller order prefer it over another transaction $t'$, then their utility can be affected if they ``steal'' $t'$. However, their utility cannot increase because (i) the maximum  includer $j$ can gain by this deviation is the committee fee of the omitted transaction $t'$ and (ii) includer $j$ needs to pay a committee fee that is no lower than the committee fee of $t'$, the block producer fee and the burning fee of this omitted transaction $t'$ . This means that the utility of the includer will decrease.
    \item  When this fake transaction has a block producer fee that makes another includer with a smaller order prefer it over another transaction $t''$: recall that the includers who follow the indicated allocation rule choose a transaction only if it belongs to the $c_{block}$ transactions with the highest block producer fee (which is also higher than $\mu^{Cost}_{BP}  \cdot s$). In this case, the other includer will prefer the fake transaction over $t''$only if, after the addition of the fake transaction, $t''$ does not belong to the best $c_{block}$ transactions in terms of block producer fee. This means, that even if includer $j$ steals $t''$, this transaction will give them no reward because it will not be added to the block by the block producer (the block producer will prefer the fake transaction as well).
    \item When this fake transaction has a block producer fee that just makes the block producer omit some other transaction in favour of this fake transaction, then the utility of includer $j$ cannot increase.
\end{itemize}
\end{itemize}
\end{proof}

\begin{theorem}
    Double TFM is Myopic Block Producer Bayesian-Nash Incentive Compatible (MBBN) under the bribe functions and beliefs specified above. 

\end{theorem}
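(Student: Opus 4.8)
The plan is to fix $H$ and $M_0$, assume as in the hypothesis of MBBN that every includer follows the indicated allocation rule and adds no fake transactions, and then show that for the unique block-producer type $\mu^{Bribe}_{1,BP}$ no deviation strictly beats the indicated allocation rule. First I would record the baseline. Under the hypothesis the inclusion lists contain exactly the real transactions selected from $L_{CM,c_{block}}$, so $t_0$ lies in one of them by assumption; if the block producer also follows the indicated rule it fills the block with the top $\min\{c_{block}, size_{L_{BP}}\}$ entries of $L_{BP}$, which contains $t_0$ (since $o_{BP}\le c_{block}$), collects no bribe, and obtains utility $U_{follow}=sum_{max,c_{block}}-\mu^{Bribe}_{1,BP}$, where the second term is the forgone bribe charged for including $t_0\in B_k\cap M_0$. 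The whole argument then reduces to showing $U(\mathrm{dev})\le U_{follow}$ for every alternative strategy, and I would split strategies into those that still include $t_0$ and those that censor it.

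For strategies that include $t_0$, I would argue the indicated rule is already optimal. Among block-feasible selections of real transactions containing $t_0$, the greedy rule (take the highest block-producer-fee transactions of $L_{BP}$) maximizes $\sum(p^{BP}_t-\mu^{Cost}_{BP})s_t$, so no reshuffling of real transactions helps. Adding any fake transaction to the block costs at least $r\cdot s$ in burning (plus committee fees) while earning the producer no block-producer fee on its own fakes, and adding fakes to the mempool only exposes the producer to the $\gamma\cdot Fee_t\cdot s$ risk term with no compensating gain once $t_0$ is kept; hence every such deviation weakly lowers utility, pinning the best ``include-$t_0$'' utility at exactly $U_{follow}$.

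The substance is in the ``censor-$t_0$'' strategies, where I would show each feasible way of dropping $t_0$ while keeping the block valid costs the producer at least the matching term inside the $\min$ defining $\mu^{Bribe}_{1,BP}$, so collecting the bribe never outweighs the loss. When $w>c_{block}$ the block is already full, so the producer may simply omit $t_0$ and promote position $c_{block}+1$ of $L_{BP}$; a direct computation shows the net loss equals $f_{BP}-\max\{f_{c_{block}+1,BP},\mu^{Cost}_{BP}\cdot s\}$, exactly the congested bribe, giving indifference. When $w\le c_{block}$ the conditional inclusion-list rule forbids dropping $t_0$ unless the block is full, so I would enumerate the admissible maneuvers: (i) filling the block with fakes, with loss $f_{BP}-\mu^{Cost}_{BP}\cdot s+r\cdot s\,(c_{block}-w+1)$, matching term 1; (ii) inserting high-block-producer-fee fakes into the mempool in Phase $1$ to push $t_0$ past position $c_{block}$ of $L_{BP}$ and invalidating them in Phase $3$, whose loss is built from the $(c_{block}-o_{BP}+1)$ pushing fakes, their $\gamma f_{BP}$ risk cost, and the $\lceil(c_{block}-o_{BP}+1)/m\rceil\, r\cdot s$ invalidation cost (term 3); and (iii) the analogous maneuver through the committee fee to make the includers drop $t_0$ (term 4). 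The cap $sum_{max,c_{block}}$ (term 2) guarantees the bribe never exceeds the producer's total attainable reward, covering the degenerate case where following yields little. Since $\mu^{Bribe}_{1,BP}$ is the minimum of these quantities, every censoring deviation incurs a cost at least equal to the bribe it unlocks, so $U(\mathrm{dev})\le U_{follow}$ in all cases.

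I expect the main obstacle to be the Phase-$1$ fake-and-invalidate deviations behind terms 3 and 4: one must count precisely how many fakes are needed to displace $t_0$, verify that the includers' deterministic allocation rule actually reacts to those fakes as intended (so that the committee-fee attack in (iii) genuinely removes $t_0$ from every inclusion list), and track the invalidation cost, which is exactly where the two bribe formulas diverge. Under ``at most one transaction per sender'' a separate invalidating transaction is needed per batch of $m$ same-sender fakes, producing the $\lceil\cdot/m\rceil$ factor, whereas under ``multiple transactions per sender'' a single invalidating transaction suffices, leaving the lone $r\cdot s$. Handling the $\gamma$ risk term correctly and confirming these maneuvers interact properly with the EIP-1559 burning rule and the block-feasibility constraint is the delicate bookkeeping that makes the bound tight.
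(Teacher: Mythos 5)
Your proposal is correct and takes essentially the same approach as the paper's proof: the same baseline comparison, the same split into include-$t_0$ deviations (dominated by greedy optimality of the indicated rule and the pure cost of fakes) and censoring deviations whose costs are matched term-by-term against the $\min$ defining $\mu^{Bribe}_{1,BP}$, the same separate treatment of $w>c_{block}$ versus $w\le c_{block}$, and the same $\lceil \cdot /m\rceil$ versus single $r\cdot s$ invalidation-cost distinction between the Unique Senders and multiple-transactions-per-sender variants. The only cosmetic difference is that you frame $sum_{max,c_{block}}$ as a cap on the bribe, whereas the paper derives it explicitly as the cost of the deviation in which the block producer omits $t_0$ without compensating fakes and lets the attesters reject the block; these amount to the same bound.
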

\begin{proof} Assume that all the types of includers follow the indicated allocation rule and they do not add fake transactions. This means that transaction $t_0$ has been included in an inclusion list. Moreover, if the block producer follows the indicated allocation rule and does not add any fake transactions then $t_0$ will be included in their block. \par 

We prove that the block producer cannot increase their utility by deviating from the indicated allocation rule or by adding fake transactions to the mempool or their block. Note that the utility of the block producer does not depend on their beliefs for the type of includers because all the types follow the same indicated allocation rule. \par  
\begin{itemize}
\item  The block producer cannot increase their utility by replacing transactions different from $t_0$ with other transactions because their indicated allocation rule chooses the set of transactions that offer them the highest block producer fee.
\item  The block producer cannot increase their utility by adding more transactions, because there is no other space or available transactions with a block producer fee of at least  $\mu^{Cost}_{BP}\cdot s$.  Recall that for every transaction in their block, they incur cost $\mu^{Cost}_{BP}\cdot s.$
\item The block producer cannot increase their utility by omitting transactions, because all their current transactions have a block producer fee of at least $\mu^{Cost}_{BP}\cdot s.$
\item The block producer cannot increase their utility by omitting or replacing transaction $t_0$ regardless of whether there is congestion ($w>c_{block}$) or not. 
\end{itemize}
The proof for the last claim is the following:
\paragraph{When $w>c_{block}$}
    \begin{itemize}
        \item If the block producer omits $t_0$ without adding a new transaction, then they will lose $ sum_{max,c_{block}}$ and they will gain the bribe loss which is no higher. Thus, their utility will not increase.
        \item If the block producer replaces $t_0$ with another transaction:
        \begin{itemize}
            \item If  $size_{L_{BP}} \geq c_{block} +1$: The most profitable deviation the block producer can perform is to replace transaction $t_0$ with the transaction that has position $c_{block}+1$ in $L_{BP}$ .  If they do so, they will gain the bribe loss $f_{BP}- \max \{f_{c_{block}+1,BP},\mu^{Cost}_{BP}\cdot s\}= f_{BP}- f_{c_{block}+1, BP}$ but they will lose $f_{BP}-f_{c_{block}+1,BP}$ because $t_0$ has a higher block producer fee. Moreover, the block producer incurs the same cost $\mu^{Cost}_{BP}\cdot s$ for both $t_0$ and the transaction with position $c_{block}+1$ in $L_{BP}$ . This means that their utility will not increase.
            \item If $size_{L_{BP}} < c_{block} +1$: There are no other transactions with block producer fee at least $\mu^{Cost}_{BP}\cdot s$ to replace $t_0$. If the block producer replaces $t_0$  with a transaction that has lower block producer fee than $\mu^{Cost}_{BP}\cdot s$, they will gain the bribe loss $f_{BP}- \max \{f_{c_{block}+1,BP},\mu^{Cost}_{BP}\cdot s\}= f_{BP}-\mu^{Cost}_{BP}\cdot s$, but they will lose an amount of at least  $f_{BP}-\mu^{Cost}_{BP}\cdot s$ from the difference between the block producer fee of $t_0$ and of the newly added transaction. Thus, their utility will not increase.
        \end{itemize}
    \end{itemize}
    \paragraph{When $c_{block} \geq w$} Recall that $o$ is the order of $t_0$ in $L_{CM,c_{block}}$ and $o_{BP}$ the order of $t_0$ in $L_{BP}$. We examine the following two variants of the FOCIL protocol:
    \begin{itemize}
        \item \textbf{Multiple transactions per sender are allowed to be added to an inclusion list}: The utility of the block producer when they do not deviate is $sum_{max,c_{block}}$ minus the bribe loss for the transaction $t_0$. Let us examine what deviations the block producer can make:
    \begin{itemize}
            \item  They can omit transaction $t_0$ (losing their fee $f_{BP}$) and add $c_{block}-w+1$ fake transactions to their block so that they do not get penalised by the attesters (recall that $w$ is the number of the transactions in the mempool including $t_0$). For every fake transaction, they need to pay $r \cdot s$ for the burning fee. The amount they gain via this deviation is equal to the bribe loss plus $\mu^{Cost}_{BP}\cdot s$, and the amount they lose is equal to $f_{BP}+r \cdot s \cdot (c_{block}-w+1)$.  As the bribe loss is no higher than $f_{BP} -\mu^{Cost}_{BP} \cdot s+r\cdot s \cdot (c_{block}-w+1)$, their utility cannot increase.
            \item They can omit transaction $t_0$ (losing its fee $f_{BP}$) and add no fake transaction. This means that they will lose their entire block rewards. Their utility in this case becomes $0$, which is not higher than their current utility as the bribe loss is no higher than  $ sum_{max,c_{block}}$.
            \item  They can omit transaction $t_0$ (losing its fee $f_{BP}$) and avoid penalisation by the attesters in the following way: by adding fake transactions to the mempool so that includers do not include $t_0$ in their inclusion lists, and later invalidating them via a single fake transaction. To make the committee ignore $t_0$,  the block producer needs:
            \begin{itemize}
                \item (First deviation): To exclude $t_0$ from the first $c_{block}$ positions of $L_{BP}$. They can do this by adding $c_{block}- o_{BP}+1$ fake transactions with a block producer fee at least $f_{BP}$.
                \item (Second deviation): To create $\min \{m \cdot c_{Incl}, size_{{L_{CM,c_{block}}}}\} -o+1$ fake transactions that (i) have committee fee at least $f_{CM}$, (ii) belong to the first $c_{block}$ transactions that give the highest block producer fee which means that they have a block producer fee at least $\max \{f_{y,BP}, \mu^{Cost}_{BP} \cdot s\}$, where $y=c_{block}-(\min \{m \cdot c_{Incl}, size_{L_{CM,c_{block}}}\} -o+1)+1$.
            \end{itemize}

        Note that here we take the worst-case scenario where the fake transactions that offer the same fee as the real transactions are preferred by the includers and the block producer. \par
        
        When the block producer performs the above deviations, they gain the bribe loss plus $\mu^{Cost}_{BP}\cdot s$, but they lose $f_{BP}$ plus $\gamma$ fraction of the fees of the fake transactions plus the base fee $r \cdot s$ for the transaction that invalidates the fake transactions. The later amount is equal to $(c_{block}- o_{BP}+1) \cdot  \gamma  \cdot  f_{BP}  +r \cdot s$ for the first deviation and equal to  $\gamma \cdot  (\min \{m \cdot c_{Incl}, size_{{{L_{CM,c_{block}}}}}\} -o+1) \cdot(f_{CM} + \max \{f_{y,BP}, \mu^{Cost}_{BP} \cdot s \}) + r \cdot s$ for the second deviation. This means that if the bribe loss is at most
        
        \begin{align*}
        &\min \{ (c_{block}- o_{BP}+1) \cdot  \gamma  \cdot  f_{BP}  +r \cdot s, \\& \gamma \cdot  (\min \{m \cdot c_{Incl}, size_{L_{CM,c_{block}}}\} -o+1) \cdot(f_{CM} \\& + \max \{f_{y,BP}, \mu^{Cost}_{BP} \cdot s \}) + r \cdot s \} +f_{BP} -\mu^{Cost}_{BP}\cdot s 
        \end{align*}
        
         the block producer cannot increase their utility by deviating. 
        \end{itemize}
        \item  \textbf{Single sender per inclusion list}: The proofs are the same except for the last point where the block producer tries to exclude $t_0$ from the inclusion lists by adding fake transactions to the mempools and later invalidating them with other transactions in the block. In this case, as every inclusion list can include transactions from a single sender then the cost for the block producer to deviate is higher. This happens because the block producer will need to add $\lceil Y/m \rceil$ transactions to the block to invalidate $Y$ fake transactions in the mempool. At most $m$ fake transactions from the same sender can be included in the inclusion lists (one per inclusion list).
        
        \end{itemize}
    
\end{proof}

\begin{theorem} Double TFM is fair-under-congestion under the bribe functions and beliefs specified above. 
\end{theorem}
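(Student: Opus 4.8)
The plan is to exploit the defining feature of the Double TFM, namely that the user sets the block producer fee and the committee fee through two independent bid components $\delta^{BP}_t$ and $\delta^{CM}_t$, with a cap $c_t$ that can be made large enough to fund both simultaneously. I would fix an arbitrary $H$, a mempool $M_0$ that is not block-feasible, and types in $\text{Bribe}^{CM},\text{Bribe}^{BP}$; let $B_k$ and $\{\mathsf{IL}_1,\dots,\mathsf{IL}_d\}$ be the honest block and inclusion lists, and take an arbitrary $t \in M_0 \setminus B_k$. The goal is to exhibit a single bid for $t$ that forces it into $B_k$ and into some inclusion list, with every other transaction's bid held fixed.

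First I would specify the bid. Writing $f^{\max}_{BP}$ and $f^{\max}_{CM}$ for the largest block producer and committee fees among the other transactions of $M_0$, I would choose per-unit amounts $A \geq \mu^{Cost}_{BP}$ with $A\cdot s > f^{\max}_{BP}$ and $B \geq \mu^{Cost}_{CM}$ with $B\cdot s > f^{\max}_{CM}$, and propose
$$
b_t=(\delta^{CM}_t,\delta^{BP}_t,c_t)=(B,\,A,\,r+A+B).
$$
Substituting into the fee formulas of the Double TFM yields block producer fee $A\cdot s$ and committee fee $B\cdot s$: since $c_t\cdot s - r\cdot s=(A+B)\cdot s \geq A\cdot s$, the inner minimum in the block producer fee equals $A\cdot s$, and the residual $(A+B)\cdot s - A\cdot s = B\cdot s$ realizes the committee fee in full. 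This computation is the heart of the matter, since it shows both fees can be driven strictly above all competitors at the same time.

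Next I would check the two inclusions. For the block: $A \geq \mu^{Cost}_{BP}$ gives $\delta^{BP}_t \geq \mu^{Cost}_{BP}$ and $c_t \geq r + \mu^{Cost}_{BP}$, so $t$ meets the block producer's eligibility thresholds, and since its block producer fee $A\cdot s$ strictly exceeds every other one, $t$ sits at the top of $L_{BP}$ and is selected into $B_k$ by the greedy highest-fee rule. For the inclusion list: each includer's rule first restricts to the set $S_{BP}$ of transactions the block producer would include, and I have just shown $t \in S_{BP}$; its committee fee $B\cdot s \geq \mu^{Cost}_{CM}\cdot s$ survives the committee-cost filter, and being the strictly largest committee fee it is picked first by the order-$1$ includer, so $t \in \mathsf{IL}_1$. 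Together these place $t$ in both the block and an inclusion list, which is exactly what fair-under-congestion demands.

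I expect the only delicate point to be the coupling in the committee-fee formula, where the block producer fee is subtracted from the budget $c_t\cdot s - r\cdot s$; this is precisely why $c_t$ must be taken as $r+A+B$ rather than smaller, and why it matters that fair-under-congestion imposes no value-based cap on $c_t$. Conceptually this is also the step that distinguishes the Double TFM from the Single Prioritized TFM: there the committee absorbs the fee first, so under congestion no bid can guarantee the block producer enough surplus to prefer $t$, whereas here the user can make the block producer fee dominant on its own while still funding a dominant committee fee.
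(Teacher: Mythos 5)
Your proof is correct and follows essentially the same construction as the paper's: set $c_t = r + \delta^{BP}_t + \delta^{CM}_t$ so the cap fully funds both fee components, then choose each component high enough to clear the cost thresholds and the competing fees, ensuring inclusion in both $B_k$ and an inclusion list. The only difference is that you overshoot by beating the \emph{maximum} block producer and committee fees (placing $t$ first in both orderings), whereas the paper only requires beating the marginal fees $f_{c_{block},BP}$ and $f_{m\cdot c_{Incl},CM}$; your stronger choice is sufficient and changes nothing substantive.
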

\begin{proof} Assume arbitrary $H$, mempool $M_0$ that is not block-feasible, lists $L_{CM,c_{block}}, L_{BP}$, and types of includers and block producer in $\text{Bribe}^{CM}, \text{Bribe}^{BP}$ respectively. As we have assumed that every transaction has the same size, the fact that $M_0$ is not block-feasible means that $w > c_{block}$. Let
\begin{itemize}
\item $\{\mathsf{IL_1}, \ldots, \mathsf{IL_d}\}$ be the inclusion lists if all the includers follow the indicated allocation rule and do not add fake transactions to the mempool and their inclusion list.
\item $B_k$ the block if the block producer follows the indicated allocation rule and does not add any fake transactions to the mempool and the block.
\end{itemize}
 We want to prove that for every $t \in M_0 \setminus B_k$ there is a bidding strategy the user could follow for their transaction to be included in at least one inclusion list and the block, assuming that: (i) all the other transactions remain the same, and (ii) the includers and block producer follow the indicated allocation rules and they do not add any fake transactions. \par 
Note that if all the types of includers follow the indicated allocation rules and do not add fake transactions, the inclusion lists consist of  the transactions that belong to the first $\min \{m \cdot c_{Incl}, size_{L_{CM,c_{block}}}\}$ positions in  $L_{CM,c_{block}}$. Recall that these transactions belong to $c_{block}$ transactions with the highest block producer fee and their block producer fee is at least $\mu^{Cost}_{BP}\cdot s$.  This means that if the block producer follows the indicated allocation rule and does not add any fake transactions, they include all the transactions from the inclusion lists. Thus, the fact that $t$ does not belong to the block means that one of the following holds:
\begin{itemize}
\item It does not belong to $L_{CM,c_{block}}$ because it has a block producer fee lower than $\max \{\mu^{Cost}_{BP}\cdot s, f_{c_{block}, BP}\}$.
\item  It does not belong to $L_{CM,c_{block}}$ because it has a committee fee lower than $\mu^{Cost}_{CM}\cdot s$.
\item It belongs to $L_{CM,c_{block}}$, but it has a committee fee lower than  $f_{m \cdot c_{Incl},CM}$.
\end{itemize}
If the sender of transaction $t$  gives a bid  $b_t=(\delta^{CM}_t,\delta^{BP}_t,c_t)$, such that:
\begin{itemize}
\item  $c_t= \delta_t^{BP}+ \delta_t^{CM}+r$
\item  $\delta_t^{BP} \cdot s > \max \{ \mu^{Cost}_{BP}\cdot s, f_{c_{block}, BP} \}$
\item $\delta_t^{CM} \cdot s > \max \{f_{m \cdot c_{Incl},CM}, \mu^{Cost}_{CM} \cdot s\}$
\end{itemize}

their transaction will be included in both an inclusion list and the block. 
\end{proof}
Finally, the Double TFM is \textit{censorship resistant} because the allocation rules of all the types of the includers and the block producer ignore bribes. 
We discuss several variations of the above theorems. 
\subsection {Variation of our results}
\label{variation}
\paragraph{ Dominant strategy of the users if the inclusion lists were unconditional}
Note that in the current model, the inclusion lists are conditional. If they were unconditional and also the following hold: \begin{itemize}

\item  $c_{block} > m \cdot c_{Incl} \geq  w$
\item  The  indicated allocation rule for the block producer was to include all the transactions from the inclusion lists even if the transactions offer a block producer fee lower than $\mu^{Cost}_{BP}\cdot s$,
\end{itemize}
then under the same assumptions as the above theorem, the dominant strategy of the users would be:
\begin{itemize}
\item  $b_t=(\delta^{CM}_t,\delta^{BP}_t,c_t)=(0, \mu^{Cost}_{BP}, \min\{v_t, r+\mu^{Cost}_{BP} \})$, if $\mu^{Cost}_{BP} \leq \mu^{Cost}_{CM}$, and
\item $b_t=(\delta^{CM}_t,\delta^{BP}_t,c_t)=(\mu^{Cost}_{CM}, 0, \min\{v_t, r+\mu^{Cost}_{CM} \})$, if $\mu^{Cost}_{BP} > \mu^{Cost}_{CM}$.
\end{itemize}
The intuition behind this result is that, given the allocation rules followed by the includers and the block producer, any transaction included in an inclusion list will also be included in the block. Therefore, the user's most profitable strategy is to cover either the minimum cost required by the includer or the minimum cost required by the block producer.
\paragraph{Removing the assumption that all the transactions have the same size}

In our proofs, for simplicity, we assume that all the transactions have the same size. If we removed this assumption: 
\begin{itemize}

\item  Instead of selecting the transactions with the highest fee, the allocation rule of includers and the block producer would select the set of transactions that maximise the sum of the fees (taking into account also the size of the transaction sizes and $C_{block}, C_{Incl}$).
\item  For some mempools, includers would have incentives to add fake transactions to the mempool to ``steal'' a transaction $t_1$ from another member with a smaller order. To steal this transaction, they should create a fake transaction $t_f$ that has a size and a committee fee that make $t_f$ more desirable than $t_1$ for the other includer. This deviation is not significant in the context of censorship, as it does not negatively affect the utility of other includers; in fact, it results in a utility gain for both parties involved.
\end{itemize}
\paragraph{Allocation rules that do not exclude transactions with a block producer fee lower than $\mu^{Cost}_{BP}>0$}

Note that the above indicated allocation rules of both the includers and the block producer exclude all the transactions with a block producer fee lower than $\mu^{Cost}_{BP}>0$. \par 

However, when the block is not full, the block producer will lose the block rewards if they omit a transaction from the inclusion list, even if this transaction has a block producer fee lower than $\mu^{Cost}_{BP}>0$. This means that the indicated allocation rule of the block producer that excludes transactions with a block producer fee lower than $\mu^{Cost}_{BP}>0$ is not a dominant strategy. \par 

Moreover, at a Nash equilibrium, we can have a variation of the above indicated allocation rules where both the includers and the block producers include transactions with a block producer fee lower than $\mu^{Cost}_{BP}>0$ in the following case: if the block rewards are higher than the loss the block producer incurs by adding these transactions. Intuitively, this happens because it is more profitable for the block producer to incur some loss from this type of transaction to avoid rejection by the attesters.

\paragraph{The amount of bribe in the bribe functions is tight}

The amount of bribe in the bribe functions of the includers and the block producer we have defined is the maximum that can be set so that the theorems still hold.

\paragraph{The impact of adding some includers that always include the target transaction}

Even if we added this type of includer, the bribe of the block producer specified in their bribe function could not increase. The reason is that in this notion we examine whether the utility of the block producer increases when they deviate assuming that all the types of includers follow the indicated allocation rule, which already ignores bribes. \par 

However, adding these includers could decrease the bribe of the block producer when we try to prove that the Double TFM is MBIC. This notion examines the utility of the block producer for every strategy of includers. If there is no includer that includes the target transaction, then the maximum cost of the block producer to deviate is $f_{BP}-\mu^{Cost}_{BP}\cdot s$. If at least one member includes the target transaction in their block, their cost is the same as this one in the bribe function we use for proving MBBN.





\section{Single TFM}
\label{single}
\subsubsection{Notation and Formal Definition of Single TFM}
We adopt the same notation as in Section \ref{double}. In this setting, there is a single fee per transaction that is split according to a fixed rule across all transactions. As a result, the lists $L_{BP}, L_{CM,c_{block}}$ are both ordered by this unified fee. Consequently, ordering by the block producer fee or the committee fee yields the same result. Therefore, the order of $t_0$ in $L_{CM,c_{block}}$ (denoted by $o$)  and the order of $t_0$ in $L_{BP}$ (denoted by $o_{BP}$) are equal.  
The definition of Single TFM is the same as the definition of the Double TFM apart from the following:
\begin{itemize}
\item The definition of the payment functions. In this case, the payment functions are as follows:
\begin{itemize}
    \item  $p^{CM}$: For every transaction not included in the block they receive zero fees. For every transaction $t$ with bid $b_t=(c_t)$ added to the block, an includer is paid as follows: If this includer is the committe member with the smallest order who has added this transaction to their inclusion list, they receive the entire committee fee-namely $\max \{(c_t\cdot s -r \cdot s-\mu^{Cost}_{BP}\cdot s),0\}\cdot z$. Otherwise, they receive zero fees.
   
    \item  $p^{BP}$: For every transaction included in their block, they receive the block producer fee-namely $\max \{\min \{ \mu^{Cost}_{BP} \cdot s, c_t\cdot s -r \cdot s\}+\max \{(c_t\cdot s -r \cdot s-\mu^{Cost}_{BP} \cdot s),0\}\cdot (1-z) ,0\}$.
\end{itemize}
\item The block producer's allocation rule remains the same, except that in this case, they only verify whether $c_t \geq r +\mu^{Cost}_{BP}$, not if $\delta_t^{BP}\geq \mu^{Cost}_{BP}$, as the bid does not contain a separate $\delta_t^{BP}$ field.
\item  In this case, the relation between $f_{CM}, f_{BP}$ is determined by the system rather than the user.
\end{itemize}
\subsubsection{Properties of Single TFM}
\begin{theorem} Assume a history of block $H$ and a set of transactions $T$ that have the same size. If (i) the users cannot overbid meaning they do not submit a bid higher than their transaction's value and (ii) the burning fee per unit of size $r$ is not excessively low (meaning that the number of transactions with values higher than $r+\mu^{Cost}_{BP}$ is less than the maximum number of transactions in a block) then: assuming that all the types of includers and the block producer follow the indicated allocation rules, the following bidding strategy for a transaction $t$: $b_t=(c_t)=( \min\{v_t, r+\mu^{Cost}_{BP} \})$ is a dominant strategy for every user in Single TFM regardless of their beliefs.
\end{theorem}
\begin{proof} The proof is similar to the proof for the Double TFM. Note that the block producer fee and the committee fee under bidding strategy  $\min\{v_t, r+\mu^{Cost}_{BP} \}$  in the Single TFM and bidding strategy $(0, \mu^{Cost}_{BP}, \min\{v_t, r+\mu^{Cost}_{BP} \})$ in the Double TFM offer the same block producer and committee fee. In both cases, if the block producer include this transaction in their block, they will collect $\max \{(\min\{v_t, r+\mu^{Cost}_{BP} \}-r) \cdot s, 0\}$, and the includer will receive zero fees.
\end{proof}

\begin{theorem} Single TFM is Myopic Committee Bayesian-Nash Incentive Compatible (MCBN) under the bribe functions and beliefs specified above. \end{theorem}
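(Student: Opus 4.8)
The plan is to fix an arbitrary history $H$, user transaction set $M_0$, and an arbitrary admissible type of includer $j$, and to show that under the MCBN hypotheses (every other includer and the block producer follows the indicated allocation rule and submits no fake transactions) the indicated rule together with $F_j=F^{Init}_{j}=\emptyset$ maximizes includer $j$'s utility. The first useful observation, specific to the Single TFM, is that a transaction's block-producer fee and committee fee are both monotone functions of the single bid $c_t$, so the lists $L_{BP}$ and $L_{CM,c_{block}}$ coincide and $o=o_{BP}$. This alignment means that the block producer's selection, the committee-fee ranking, and the per-includer slot assignment of the indicated rule (includer of order $k$ receives positions $(k-1)c_{Incl}+1,\dots,k\,c_{Incl}$ of $L_{CM,c_{block}}$) are all driven by a single ordering, which is the structural fact the whole argument rests on. In particular the unique includer whose indicated list contains $t_0$ is the one of order $o'=\lceil o/c_{Incl}\rceil$.

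Next I would collapse the Bayesian statement to a deterministic one. Since the indicated allocation rule is type-independent (it ignores the bribe functions), under the MCBN hypotheses every other includer and the block producer behaves identically regardless of its realized type. Consequently $B_k$, the inclusion lists of the other includers, and hence which transactions includer $j$ is the smallest-order includer of, depend only on $H,M$ and not on the others' types. Includer $j$'s utility therefore depends only on $j$'s own type (through its bribe loss) and $j$'s own actions, so the expectation over $\text{Belief}_{CM}$ is vacuous and it suffices to establish a pointwise best response, case-split by $j$'s own type.

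Then I would carry out the core incentive computation for the includer of order $o'$, which is of type $\mu^{Bribe}_{1,CM}$ or $\mu^{Bribe}_{2,CM}$. For $\mu^{Bribe}_{1,CM}$: the only candidate deviation is to omit $t_0$ and refill the freed slot with the transaction at position $g=c_{Incl}\cdot o'+1$ of $L_{CM,c_{block}}$. Following the rule yields utility $f_{CM}-\mu^{Cost}_{CM}\cdot s-\mu^{Bribe}_{1,CM}$ from this slot, while the deviation yields $\max\{f_{g,CM}-\mu^{Cost}_{CM}\cdot s,0\}$; substituting the prescribed bribe $\mu^{Bribe}_{1,CM}=f_{CM}-\max\{f_{g,CM},\mu^{Cost}_{CM}\cdot s\}$ shows both equal $\max\{f_{g,CM}-\mu^{Cost}_{CM}\cdot s,0\}$, so including $t_0$ is a weak best response. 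For $\mu^{Bribe}_{2,CM}$: its bribe is contingent on the block producer also omitting $t_0$, but $t_0$ lies in the first $c_{block}$ positions of $L_{BP}$, so the block producer includes it, the bribe is never triggered, and the includer simply maximizes fees. For an includer of order $\neq o'$ of type $\mu^{Bribe}_{3,CM}$: the indicated rule already omits $t_0$, so the bribe $X$ is collected for free, while including $t_0$ either displaces a higher-fee assigned transaction (when its order is below $o'$) or wastes a slot because $o'$ already wins $t_0$'s committee fee (when its order is above $o'$), in both cases additionally forfeiting $X$, hence never profitable.

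Finally I would rule out fake-transaction deviations. The key lemma is that each includer's indicated rule first discards every transaction the block producer would not select, so a fake transaction submitted by $j$ can influence anyone only if the block producer actually places it in $B_k$; and for any fake transaction that ends up in $B_k$, the utility credits includer $j$ with no committee fee (the credit sum ranges over non-fake transactions) while charging $j$ the burning fee, the block-producer fee, and the other includers' committee fees. Thus any effective fake transaction is strictly costly and any ineffective one is neutral, so $F_j=F^{Init}_{j}=\emptyset$ is optimal. The main obstacle I anticipate is precisely this last part: one must track the case where $j$'s mempool insertion changes the block producer's selection, thereby displacing a real transaction or letting a smaller-order includer capture a slot, and verify that every such chain of consequences leaves $j$ no better off. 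The same-size assumption and the alignment $o=o_{BP}$ are what keep this bookkeeping tractable, and since the Single TFM differs from the Double TFM only in how $f_{CM}$ and $f_{BP}$ are derived from the bid, much of this accounting can be imported from the Double TFM MCBN argument.
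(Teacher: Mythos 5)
Your proposal is correct and follows essentially the same route as the paper's proof: it exploits the Single TFM's single-bid fee coupling (so $L_{BP}$ and $L_{CM,c_{block}}$ align and $o=o_{BP}$), collapses the Bayesian expectation because the indicated rules are type-independent, case-splits on whether includer $j$ has order $\lceil o/c_{Incl}\rceil$ with the same marginal computation showing the bribe exactly offsets the foregone fee $f_{CM}-\max\{f_{g,CM},\mu^{Cost}_{CM}\cdot s\}$, and rules out fake transactions by charging $j$ the full fee burden (burning fee, block-producer fee including the $(1-z)/z$-coupled share, and other includers' committee fees) of any fake transaction that lands in $B_k$. Your explicit observations that the $\mu^{Bribe}_{2,CM}$ bribe is never triggered because the block producer includes $t_0$, and that fee coupling makes a high block-producer-fee fake also carry a high committee fee, are exactly the Single-TFM-specific points the paper uses to adapt the Double TFM argument.
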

\begin{proof}The first part of the proof, given below, is the same as the proof for the Double TFM because the committee fee is shared among the includers in the same way as in the Double TFM, $L_{BP}, L_{CM,c_{block}}$ consist of the same transactions, and the bribe functions are defined with respect to $f_{CM}, f_{BP}$. \par 

`` Assume arbitrary $H,M_0$ and that every type of all but one includers and block producer follow the indicated allocation rule and do not add fake transactions. We need to prove that every type of the remaining includer denoted by $j$ cannot increase their utility by deviating from the indicated allocation strategy or by adding fake transactions to their inclusion list or to the mempool. \par 

For both types of includer $j$, it holds that their beliefs do not affect their utility because all the types of the other includers and the block producer follow the same indicated allocation rule. \par 

\paragraph{If the order of $j$ is not $\lceil \frac{o}{c_{Incl}}\rceil$}
\begin{itemize}
    \item  $t_0$ is not in their inclusion list when they follow the indicated allocation rule, and thus they do not incur any bribe loss. This means that the bribe function that determines their type does not affect their utility. Note that when all the includers follow the indicated allocation rule, $t_0$ is included in the inclusion list of the member with order $\min \{\lceil \frac{o}{c_{Incl}}\rceil, m\}$.  For example, if $o=10$ and $c_{Incl}=3$, then $t_0$ is included in the inclusion list of the member with order $4$.
    \item  All their current transactions offer them a committee fee no smaller than cost $\mu^{Cost}_{CM}  \cdot s$. Thus, their utility cannot increase by omitting them.
    \item  They cannot add a transaction already included by an includer with a higher order to their inclusion list, because they have no space (if they had space, then the includers with a higher order would have no transactions in their inclusion lists). If they choose to replace one (or more) of their current transactions with such a transaction, their utility cannot increase; the members with a worse (higher) order include transactions with lower committee fees than their current transactions.
    \item If they choose to add one (or more) transactions already included by an includer with a smaller order, then their utility will decrease because they will take zero fees for this transaction, and they will incur cost $\mu^{Cost}_{CM} \cdot s$. 
    \item If they choose to add a transaction not included by any includer, then their utility will decrease because this transaction either has a very low block producer fee and thus will not be included in the block (which means that it will give them zero fees) or has a committee fee lower than $\mu^{Cost}_{CM}$.
\end{itemize}
\paragraph{ If the order of $j$ is equal to $\lceil \frac{o}{c_{Incl}}\rceil$}
\begin{itemize}
    \item The includer $j$ incurs bribe loss equal to $f_{CM}- \max \{f_{g,CM},\mu^{Cost}_{CM}\cdot s \}$, where $g=c_{Incl} \cdot \lceil \frac{o}{c_{Incl}} \rceil +1$ , as they have included $t_0$ in their inclusion list.
    \item If they omit $t_0$, their utility will not increase because they will gain the bribe loss plus $\mu^{Cost}_{CM}\cdot s$ , but they will lose $f_{CM}$.
    \item  If they replace $t_0$ with a transaction included by an includer with a smaller order then their utility will decrease, as they will receive zero fees from this transaction.
    \item  If they replace $t_0$ with a transaction included by an includer with a higher order, or a transaction not included in an inclusion list, then the maximum fee they will gain from this transaction will be $\max \{f_{g,CM}, \mu^{Cost}_{CM}\cdot s \}$ and they will lose $f_{CM}$. $f_{CM}- \max \{f_{g,CM}, \mu^{Cost}_{CM}\cdot s\}$ is no smaller than the bribe loss. Thus, their utility cannot increase.
    \item  Regarding the other transactions, the proof that their utility cannot increase by deviating from the indicated allocation rule is the same as the case when the order of $j$ is not $\lceil \frac{o}{c_{Incl}}\rceil$.
\end{itemize}
Regardless of their order, includer $j$ cannot increase their utility by adding fake transactions to their inclusion list, because these transactions will give them no extra reward. Moreover, they cannot increase their utility by adding fake transactions to the mempool, because:
\begin{itemize}
\item  When this fake transaction does not affect which transactions the other includers  or the block producer include in their inclusion list and block respectively, it cannot affect their utility.
\item  In order for the fake transaction to affect which transactions the other includers  or the block producer include in their inclusion list and block respectively, they need to have a fee that will be paid by includer $j$."
\par 
\textit{The remaining part of the proof has some differences and is as follows:}
\begin{itemize}
    \item  When this fake transaction gives a committee fee that makes another includer with a smaller order prefer it over another transaction $t'$, then their utility can be affected if they ``steal'' $t'$. However, their utility cannot increase because (i) the maximum includer $j$ can gain by this deviation is the committee fee of the omitted transaction $t'$ and (ii) includer $j$  needs to pay the committee fee of this omitted transaction $t'$, the burning fee, $\mu^{Cost}_{BP}$, and $(1-z)/z$ times the committee fee of $t'$ (this is the fee that is awarded to the block producer apart from $\mu^{Cost}_{BP}$). This means that the utility of includer $j$  will decrease.
    \item  When this fake transaction has a block producer fee that makes another includer with a smaller order prefer it over another transaction $t'$ (because $t'$, after the addition of the fake transaction, will not belong to the first $c_{block }$ positions of $L_{BP}$): In this case, also the committee fee of the fake transaction will be higher than the committee fee of $t'$. Thus, the utility of includer $j$ will decrease.
    \item  When this fake transaction has a block producer fee that makes the block producer omit some other transactions in favour of this fake transaction, then the utility of includer $j$ cannot increase.
\end{itemize}
\end{itemize}
\end{proof}
\begin{theorem} Single TFM is Myopic Block Producer Bayesian-Nash IncentiveCompatible (MBBN) under the bribe functions and beliefs specified above. \end{theorem}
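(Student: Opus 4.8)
The plan is to follow the same architecture as the MBBN proof for the Double TFM (Appendix \ref{app:doubleMBBN}), exploiting the simplifications that the single-fee split affords. Since the system splits every transaction's fee by the same rule, the residual $c_t\cdot s - r\cdot s - \mu^{Cost}_{BP}\cdot s$ determines both $f_{CM}=z\cdot(\text{residual})$ and $f_{BP}-\mu^{Cost}_{BP}\cdot s=(1-z)\cdot(\text{residual})$, so ordering by block producer fee and by committee fee coincide and $o=o_{BP}$. First I would fix $H,M_0$ and, using the hypothesis that every includer follows the indicated allocation rule without fake transactions, record the resulting inclusion lists; because $t_0$ lies in the top $\min\{m\cdot c_{Incl}, size_{L_{CM,c_{block}}}\}$ positions of $L_{CM,c_{block}}$, it appears in the list of the includer of order $\lceil o/c_{Incl}\rceil$. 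I would then write down the block producer's utility $U_{\mathrm{follow}}$ from obeying the indicated allocation rule (greedily filling the block by block producer fee, including $t_0$, forgoing its bribe) and argue that, absent any bribe, this greedy rule is myopically optimal, so every candidate profitable deviation must be an attempt to collect the bribe on $t_0$. Since $\text{Bribe}^{BP}=\{\mu^{Bribe}_{1,BP}\}$ is a singleton, the ``for every type'' quantifier reduces to this one case.

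Next I would enumerate the censoring strategies and compute the utility loss of each, relative to $U_{\mathrm{follow}}$ and net of the bribe, matching it to one argument of the $\min$ defining $\mu^{Bribe}_{1,BP}$. In the congested regime $w>c_{block}$ the block is full, so $t_0$ can be dropped without attester rejection by promoting the transaction at position $c_{block}+1$ of $L_{BP}$; the loss is $f_{BP}-\max\{f_{c_{block}+1,BP},\mu^{Cost}_{BP}\cdot s\}$, exactly the bribe. In the uncongested regime $w\le c_{block}$, dropping $t_0$ leaves free space and would trigger rejection, so the block producer must either (i) accept rejection, losing all block rewards $sum_{max,c_{block}}$; (ii) forgo $f_{BP}-\mu^{Cost}_{BP}\cdot s$ on $t_0$ and pad the block with $c_{block}-w+1$ fake transactions at cost $r\cdot s$ each to legitimately exclude $t_0$; or (iii) inject fake transactions into the mempool with fees high enough to displace $t_0$ from the includers' selections and then invalidate them in the block. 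For each branch I would verify the corresponding term of the $\min$ equals the loss, so the bribe never exceeds the cost and the block producer is at best indifferent, making the indicated allocation rule a best response.

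The main obstacle will be branch (iii), the mempool-manipulation strategy, and showing its cost equals the fourth argument of the $\min$. Here I must count precisely how many fake transactions are needed to push $t_0$ below position $\min\{m\cdot c_{Incl}, size_{L_{CM,c_{block}}}\}$ in the committee ordering, namely $\min\{m\cdot c_{Incl}, size_{L_{CM,c_{block}}}\}-o+1$, and, using the single-fee coincidence of orderings, argue that a fake made attractive to includers (committee fee above $f_{CM}$) is simultaneously attractive under the block producer's own allocation rule, so the includers believe it will enter the block. I then account for the cost of honoring these fakes with probability governed by $\gamma$ (the $\gamma$-weighted sum of the forgone $f_{CM}$ and the block producer fee $\max\{f_{c_{block}-(\min\{m\cdot c_{Incl}, size_{L_{CM,c_{block}}}\}-o+1)+1,BP},\mu^{Cost}_{BP}\cdot s\}$ of the transaction that slides into the freed block slot), plus the per-sender invalidation cost $\lceil(\min\{m\cdot c_{Incl}, size_{L_{CM,c_{block}}}\}-o+1)/m\rceil\cdot r\cdot s$ incurred by the fake block transactions needed under the one-transaction-per-sender rule. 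Matching this bookkeeping to the stated term, and separately checking that no deviation mixing these strategies or adding superfluous fakes can do strictly better (each extra fake carries a strictly positive net cost from burning and the $\gamma$-fraction payment), completes the argument.
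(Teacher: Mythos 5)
Your proposal is correct and takes essentially the same route as the paper: the paper's own proof of this theorem is a one-paragraph reduction to the Double TFM MBBN argument (Appendix \ref{app:doubleMBBN}), noting exactly the points you exploit --- the committee fee is shared identically, $L_{BP}$ and $L_{CM,c_{block}}$ coincide so $o=o_{BP}$, and the dependency between $f_{CM}$ and $f_{BP}$ does not affect the case analysis. Your expanded bookkeeping (matching each censoring deviation --- attester rejection, padding with $c_{block}-w+1$ fakes, and the $\gamma$-weighted mempool-manipulation-plus-invalidation strategy --- to the corresponding argument of the $\min$ in $\mu^{Bribe}_{1,BP}$) is precisely the Double TFM proof that the paper invokes by reference.
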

\begin{proof} The proof is similar to the proof for the Double TFM, because the committee fee is shared among the includers in the same way as in the Double TFM, $L_{BP}, L_{CM,c_{block}}$ consist of the same transactions, and the bribe functions are defined with respect to $f_{CM}, f_{BP}$. The difference is that there is a dependency between $f_{CM}$ and $f_{BP}$ but this does not affect the proofs.   \end{proof}
\begin{theorem} Single TFM is fair-under-congestion under the bribe functions and beliefs specified above. \end{theorem}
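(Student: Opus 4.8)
The plan is to exploit the single fee's monotone split to push the target transaction to the top of \emph{both} orderings simultaneously, which forces its inclusion in the block and in the order-$1$ includer's list. First I would record the structural consequences of congestion: since $M_0$ is not block-feasible and all transactions share the common size $s$, we have $w>c_{block}$, so the block $B_k$ produced by the indicated allocation rule consists of exactly the $c_{block}$ transactions of highest block producer fee among those with $c_t\geq r+\mu^{Cost}_{BP}$. Because all parties share the same view of $M$ and the includer rule re-derives the block producer's choice set with the \emph{same} deterministic allocation rule, the set $S$ that each includer restricts to equals this $B_k$; this consistency means I never have to reconcile two competing predictions of the block.

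Next I would invoke the defining feature of Single TFM: for any transaction with $c_t\geq r+\mu^{Cost}_{BP}$ the block producer fee equals $\mu^{Cost}_{BP}\cdot s+(c_t\cdot s-r\cdot s-\mu^{Cost}_{BP}\cdot s)(1-z)$ and the committee fee equals $(c_t\cdot s-r\cdot s-\mu^{Cost}_{BP}\cdot s)\,z$. For $z\in(0,1)$ both are strictly increasing and unbounded in $c_t$. Fix any $t_0\in M_0\setminus B_k$ and let $F$ be the maximum, over the finitely many competing transactions whose bids are held fixed, of their block producer and committee fees. I would choose $c_{t_0}$ large enough that $t_0$'s block producer fee exceeds $F$ and its committee fee exceeds $\max\{F,\mu^{Cost}_{CM}\cdot s\}$. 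Since only $t_0$'s bid changes and $L_{BP}$ is ordered by block producer fee, $t_0$ moves to position $1$ of $L_{BP}$; it therefore lies in the top $c_{block}$ and is selected first by the block producer, so $t_0\in B_k$.

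Then I would establish membership in an inclusion list. Because $t_0$ now occupies the top of $L_{BP}$ and its committee fee exceeds $\mu^{Cost}_{CM}\cdot s$, it enters $L_{CM,c_{block}}$, and since its committee fee dominates every competitor it sits at position $o=1$ there. The order-$1$ includer, following the indicated rule with no prior includers to simulate, discards every transaction outside $S=B_k$ and every transaction with committee fee below $\mu^{Cost}_{CM}\cdot s$, then greedily selects the highest committee fees up to its capacity $c_{Incl}\geq 1$; it therefore selects $t_0$. Hence $t_0\in B_k$ and $t_0\in\mathsf{IL}_{1}$, which is exactly what fair-under-congestion requires, and by construction no other transaction was altered.

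The main obstacle, and the reason this property is genuinely tied to the fee split, is the boundary behaviour in $z$. If $z=1$ the block producer fee saturates at $\mu^{Cost}_{BP}\cdot s$, so no bid can lift $t_0$ above its competitors in $L_{BP}$ and block inclusion degenerates to the uncontrollable deterministic tie-break; symmetrically, if $z=0$ all committee fees collapse to a common value and inclusion-list membership becomes tie-break-dependent (and empty whenever $\mu^{Cost}_{CM}>0$). I would therefore carry the argument in the regime $z\in(0,1)$, which is precisely the regime in which the notation section's claim that $L_{BP}$ and $L_{CM,c_{block}}$ carry a common strict order is valid, and I would note explicitly that strict monotonicity and unboundedness of both fee components in $c_t$ are the only facts driving the construction. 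The argument is thus identical in spirit to the Double TFM fair-under-congestion proof, with the user's single knob $c_t$ replacing direct control over the split.
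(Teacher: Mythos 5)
Your proof is correct and takes essentially the same route as the paper's: under congestion ($w>c_{block}$), raise the single bid $c_t$ high enough that the resulting block producer fee and committee fee place $t_0$ in the block producer's selection and in an includer's greedy committee-fee selection --- you overshoot by dominating every competitor and landing at position $1$ of both lists, whereas the paper only requires $c'_t > r+\mu^{Cost}_{BP}$, $(c'_t-r-\mu^{Cost}_{BP})\cdot z \geq \mu^{Cost}_{CM}$, and membership among the $\min\{m\cdot c_{Incl}, size_{L_{CM,c_{block}}}\}$ highest bids, but this is the same construction. Your explicit restriction to $z\in(0,1)$ is a fair sharpening the paper leaves implicit: at $z=0$ the paper's own condition $(c'_t-r-\mu^{Cost}_{BP})\cdot z\geq \mu^{Cost}_{CM}$ is unsatisfiable whenever $\mu^{Cost}_{CM}>0$, and at $z=1$ all qualifying transactions' block producer fees saturate at $\mu^{Cost}_{BP}\cdot s$, so inclusion under congestion degenerates to the deterministic tie-break, exactly as you observe.
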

\begin{proof}Assume arbitrary $H$, mempool $M_0$ that is not block-feasible, lists $L_{CM,c_{block}}, L_{BP}$ and types of includers and block producer in $\text{Bribe}^{CM}, \text{Bribe}^{BP}$ respectively. As we have assumed that every transaction has the same size, then the fact that $M_0$ is not block-feasible means that $w > c_{block}$. Let
\begin{itemize}
\item  $\{\mathsf{IL_1}, \ldots, \mathsf{IL_d}\}$ be the inclusion lists if all the includers follow the indicated allocation rule and do not add fake transactions to the mempool and their inclusion list.
\item $B_k$ the block if the block producer follows the indicated allocation rule and does not add any fake transactions to the mempool and the block.
\end{itemize}
 We want to prove that for every $t \in M_0 \setminus B_k$ there is a bidding strategy the user could follow for their transaction to be included in at least one inclusion list and the block, assuming that all the other transactions remain the same, and the includers and block producer follow the indicated allocation rules and they do not add any fake transactions. \par 

Recall that all the types of includers who follow the indicated allocation rules and do not add fake transactions include in their inclusion lists the transactions that belong to the first $\min \{m \cdot c_{Incl}, size_{L_{CM,c_{block}}}\}$ positions in  $L_{CM,c_{block}}$. This means that if the block producer follows again the indicated allocation rule and does not add any fake transactions, they include all the transactions from the inclusion lists. Thus, the fact that $t$ does not belong to the block means that at least one of the followings hold:
\begin{itemize}

\item  It does not belong to $L_{CM,c_{block}}$ because it has a block producer fee lower than $\max \{\mu^{Cost}_{BP}\cdot s, f_{c_{block}, BP}\}$.
\item  It does not belong to $L_{CM,c_{block}}$ because it has a committee fee lower than $\mu^{Cost}_{CM}\cdot s$.
\item It belongs to $L_{CM,c_{block}}$, but it has a committee fee lower than  $f_{m \cdot c_{Incl},CM}$.
\end{itemize}
If the sender of transaction $t$  gives a bid  $b'_t=(c'_t)$, such that:
\begin{itemize}
\item  $c'_t > r+ \mu^{Cost}_{BP}$
\item  $(c'_t-r-\mu^{Cost}_{BP}) \cdot z \geq \mu^{Cost}_{CM}$
\item $c'_t$ belongs to the $\min \{m \cdot c_{Incl}, size_{L_{CM,c_{block}}}\}$ highest bids
\end{itemize}
then this transaction will be included in both an inclusion list and the block.
\end{proof}
\paragraph{Censorship resistant}

Single TFM is \textbf{censorship resistant} because the allocation rules of all the types of  includers and the block producer ignore bribes.
\subsubsection{Intuition about How the Choice $z,c_{t_0}$ Affect the Minimum Bribe Needed for Censorship}
\label{app:zminimumbribe}
Based on Theorems \ref{theoremA}, \ref{theoremC} we prove in Section \ref{simplified} for conditional inclusion lists, assuming that $\mu^{Cost}_{CM}=\mu^{Cost}_{BP}=0$ and $w=c_{block}$, it holds that: if a briber gives a sum of bribes higher than  $min \{m \cdot f_{CM}, r \cdot s \} +f_{BP}$ , there is no Nash equilibrium where the target transaction $t_0$ is included in the block. In this section, we examine how this amount is affected by the choice of $c_{t_0}$ or $z$. In the Single TFM, this amount is equal to

\begin{align} & min \{m \cdot f_{CM}, r \cdot s \} +f_{BP} = \\ & min \{m \cdot(c_{t_0}-r)\cdot s \cdot   z, r \cdot s \} + \\ &(c_{t_0}-r) \cdot (1-z)\cdot s \end{align}

Note that  $f_{CM}= (c_{t_0}-r) \cdot s \cdot  z$ and $f_{BP}= (c_{t_0}-r) \cdot (1-z)$, because $t_0$ belongs to first $\min \{m \cdot c_{Incl}, size_{L_{CM,c_{block}}}\}$ positions in  $L_{CM,c_{block}}$ and thus it holds  $c_{t_0} \geq r$. \par  

If we fix $z$, and the mempool $M$ apart from the target transaction, then the higher the $c_{t_0}$, the higher the above amount is. \par 

Now we fix $M,c_{t_0},r$ and we examine for which $z$ the above formula is maximised. We prove that it is maximised for $z_0= \min \{\frac{\frac{r \cdot s}{m}}{(c_{t_0}-r)},1\}$. \par  

Note that:
\begin{itemize}

\item The latter amount depends on $r,c_{t_0}$, which means that it does not apply to every target transaction.
\item  It holds that the higher $r$, the higher $z_0$. The intuition about this is the following: when $r$ is very high  $min \{m \cdot f_{CM}, r \cdot s \} = m \cdot f_{CM}$, which means that equations $2,3$ depend not only on the block producer fee but also on the committee fee.
\end{itemize}
\begin{proof} We have the following two cases:
\begin{itemize}
\item  $\frac{r }{m}  > c_{t_0}-r$. In this case, it holds  $0 \leq z \leq \min \{1, \frac{\frac{r }{m}}{(c_{t_0}-r)}\}$, because it holds $z \leq 1$.
\item $\frac{r }{m} \leq c_{t_0}-r$. In this case, we have either:
\begin{itemize}
    \item  $0 \leq z \leq  \frac{\frac{r }{m}}{(c_{t_0}-r)}$, or
    \item  $\frac{\frac{r }{m}}{(c_{t_0}-r)}< z \leq 1$.
\end{itemize}
\end{itemize}
Thus,
\begin{itemize}
\item  When $\frac{r }{m} \leq c_{t_0}-r \text{ and }1 \geq z >  \frac{\frac{r }{m}}{(c_{t_0}-r)}$, the equation $2,3$  is equal to

$$
r \cdot s+ (c_{t_0}-r) \cdot (1-z)\cdot s
$$

This is maximised for $z_0= \frac{\frac{r }{m}}{(c_{t_0}-r)}$.

\item  When $[\frac{r }{m}  > c_{t_0}-r]$  OR $[\frac{r }{m}\leq c_{t_0}-r\text{ and }0 \leq z \leq  \frac{\frac{r }{m}}{(c_{t_0}-r)}]$ the equation $2,3$ is equal to

$$
m \cdot(c_{t_0}-r)\cdot s \cdot   z +(c_{t_0}-r) \cdot (1-z)\cdot s
$$
\end{itemize}
The above formula is maximised for $z_0= \min \{\frac{\frac{r }{m}}{(c_{t_0}-r)},1\}$.
\end{proof}

\section{Excluding Strategies that Add Fake Transactions to the Mempool}
\label{simplified}
In this section, we restrict the strategy space of both includers and the block producer by excluding strategies that involve adding fake transactions to the mempool. However, we still allow strategies that add fake transactions directly to the block or inclusion list. We analyse how the minimum bribe required by a briber to censor a transaction varies based on whether the inclusion lists are conditional or unconditional. 
\begin{itemize}
\item  Let $M$ be an arbitrary mempool and $t_0$ be the target transaction the briber tries to censor.
\item  Let $s$ be the size of every transaction in the mempool $M$.
\item  $r\cdot s$ the burning fee per transaction.
\item  Let $c_{block}, c_{Incl}$ be the maximum number of transactions the block and the inclusion list can store respectively.
\item Let $sum$ be the block rewards (block producer fees minus costs).
\item  Let $w$ be the number of transactions in $M$ (note that $M=M_0$ because theer are not afke transactions in the mempool by assumption.
\item  Let $f_{CM}$ be the fee the committee receives if $t_0$ is included in an inclusion list and the block.
\item  Let $f_{BP}$ be the fee the block producer receives if  $t_0$ is included in the block regardless of whether this transaction is included in an inclusion list.
\item  Let us assume $\mu^{Cost}_{BP}=\mu^{Cost}_{CM}=0$.
\item The includers and the block producer have only one type that known to the other parties. Thus, we use the notion of Nash equilibrium instead of Bayesian Nash equilibrium.
\item  The bribe function of the block producer offers them bribe $B_1$ when they omit transaction $t_0$ from their block.
\item The bribe function of a includer $j$ gives them bribe $B^j$ when they omit transaction $t_0$ from their block.
\end{itemize}
\subsubsection{Conditional Lists}

\begin{theorem}\label{theoremA}  Regardless of the payment rule, if  $B_1>f_{BP} + r \cdot s \cdot (\max \{c_{block}-w+1,0\})$, it is a dominant strategy for the block producer to censor $t_0$, which means that there is no Nash equilibrium where $t_0$ is included in the block.
\end{theorem}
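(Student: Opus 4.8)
The plan is to establish the stronger claim that omitting $t_0$ is a \emph{strictly} dominant action for the block producer; the absence of any Nash equilibrium containing $t_0$ is then immediate, since a strictly dominated action is never played. The single structural fact I would lean on is that, under conditional inclusion lists, a completely full block (one holding $c_{block}$ transactions) is always accepted by the attesters, irrespective of the inclusion lists. Accordingly, I would fix an \emph{arbitrary} profile of inclusion lists and show that, against it, the block producer's best valid block that excludes $t_0$ strictly beats their best valid block that contains $t_0$; because the inclusion-list profile is arbitrary, this yields strict dominance.

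I would then run an exchange argument on the fees. Since $\mu^{Cost}_{BP}=0$ and payments to the producer are non-negative, every real transaction contributes a non-negative amount to the producer's utility; let $f_{(1)}\ge f_{(2)}\ge\cdots$ be the (non-negative) block-producer fees of the $w-1$ real transactions other than $t_0$, in decreasing order. For the best block $B^\ast$ that contains $t_0$, the contribution of $t_0$ is exactly $f_{BP}$, fakes only subtract, and the at most $c_{block}-1$ remaining slots can earn at most the top $\min\{w-1,\,c_{block}-1\}$ real fees, so $U(B^\ast)\le f_{BP}+\sum_{i=1}^{\min\{w-1,\,c_{block}-1\}} f_{(i)}$. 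Against this I would pit the explicit block $B'$ that fills all $c_{block}$ slots with the highest-fee real transactions other than $t_0$, using exactly $\max\{c_{block}-w+1,0\}$ fake transactions to top up when fewer than $c_{block}$ reals are available. By construction $B'$ is full, hence valid, and omits $t_0$, so it collects the bribe, giving $U(B')=\sum_{i=1}^{\min\{w-1,\,c_{block}\}} f_{(i)}-r\cdot s\cdot\max\{c_{block}-w+1,0\}+B_1$. The two truncated fee sums differ by at most one extra non-negative term, so $U(B')-U(B^\ast)\ge B_1-f_{BP}-r\cdot s\cdot\max\{c_{block}-w+1,0\}>0$ by hypothesis.

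The step I expect to be the crux is the capacity bookkeeping: verifying that the minimal number of fakes needed to make $B'$ full while excluding $t_0$ is exactly $\max\{c_{block}-w+1,0\}$, and checking the fee comparison separately in the three regimes $w>c_{block}$ (congestion, where no fakes are required and the sums differ by the single term $f_{(c_{block})}\ge0$), $w=c_{block}$, and $w<c_{block}$ (where all $w-1$ reals must be used and the two sums coincide). The non-negativity of the real fees is what lets me both discard unprofitable transactions from the upper bound on $U(B^\ast)$ and include all remaining reals in $B'$ at no cost. Once these are in place, omission strictly dominates inclusion against every inclusion-list profile, so $t_0$ is absent from the block in every equilibrium, proving the theorem.
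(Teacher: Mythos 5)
Your proposal is correct and follows essentially the same route as the paper's proof: the key idea in both is that under conditional inclusion lists a full block is always accepted, so the block producer can omit $t_0$ by topping up with $\max\{c_{block}-w+1,0\}$ fake transactions at burning cost $r\cdot s$ each, making the total cost of censoring at most $f_{BP}+r\cdot s\cdot\max\{c_{block}-w+1,0\}$, which the bribe $B_1$ exceeds by hypothesis. Your exchange argument with the ordered fees $f_{(i)}$ and the three capacity regimes simply makes explicit the bookkeeping that the paper's two-sentence proof leaves implicit.
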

\begin{proof} This holds because regardless of whether $t_0$ has been included in an inclusion list, the block producer can omit $t_0$ without being rejected by the attesters by adding fake transactions to their block. The maximum cost the block producer incurs when they omit $t_0$ is $f_{BP}+r \cdot s \cdot (\max \{c_{block}-w+1,0\})$. Thus, it is more profitable for them to receive the bribe and omit $t_0$.
\end{proof}
\begin{theorem} Regardless of the payment rule, if $B_1>f_{BP}$, there is a Nash equilibrium where both the includers and the block producer censor $t_0$. 
\end{theorem}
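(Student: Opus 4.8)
The plan is to exhibit one explicit strategy profile in which every includer and the block producer censor $t_0$, and then check that no single party can profitably deviate against it. I would take the candidate profile in which each includer $j$ omits $t_0$ from its inclusion list (filling the remaining slots with its most profitable transactions), and the block producer plays the contingent strategy ``always omit $t_0$ from $B_k$; if some includer nonetheless places $t_0$ in its inclusion list, insert enough fake transactions directly into the block to make it full, so that the conditional inclusion-list rule permits omitting $t_0$ and $B_k$ stays attester-approved.'' This strategy is available in the simplified model, since fake transactions may still be added directly to the block. On the equilibrium path no includer submits $t_0$, so the block producer simply never includes it and needs no fakes.

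First I would verify that censoring is a best response for the block producer. Fixing all includers at censoring, the realised inclusion lists do not contain $t_0$, so the block producer has no inclusion-list obligation regarding it; censoring yields the bribe $B_1$ plus the block rewards from the remaining transactions. The only relevant alternative is to include $t_0$: since $\mu^{Cost}_{BP}=0$, this raises the block rewards by at most $f_{BP}$ (exactly $f_{BP}$ if there is free space, and at most $f_{BP}$ if $t_0$ displaces another included transaction) while forfeiting $B_1$. Because $B_1 > f_{BP}$, including $t_0$ yields a strictly negative net change, so censoring is optimal for the block producer.

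Next I would verify that censoring is a best response for each includer $j$, holding the block producer's fixed strategy and the other includers' strategies. If includer $j$ deviates and inserts $t_0$ into its inclusion list, the block producer's strategy still omits $t_0$ from $B_k$ (padding with fakes if needed), so $t_0$ never reaches the block. Since the committee fee $f_{CM}$ is paid only when $t_0$ lies in both an inclusion list \emph{and} the block, and $\mu^{Cost}_{CM}=0$, the deviation earns includer $j$ no fee and no cost saving while forfeiting the bribe $B^j \geq 0$; hence it is weakly unprofitable. Combining the three checks shows the profile is a Nash equilibrium in which all includers and the block producer censor $t_0$.

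The delicate point, and the one I would highlight, is the block producer's off-path behaviour. Deterring includer $j$'s deviation requires that the block producer can keep $t_0$ out of $B_k$ even when it appears in an inclusion list, which under conditional lists means filling the block with fake transactions at a cost of roughly $r \cdot s$ per added transaction. If we demanded subgame perfection, this off-path action would itself have to beat simply including $t_0$, forcing the stronger bound $B_1 > f_{BP} + r \cdot s \cdot (\max\{c_{block}-w+1,0\})$ of Theorem \ref{theoremA}. For plain Nash equilibrium, however, the node where the block producer would pad with fakes is never reached on the equilibrium path, so its cost does not enter the block producer's realised payoff; it only needs to render the includer's deviation futile. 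This is precisely why the weaker hypothesis $B_1 > f_{BP}$ is enough here, in contrast to the dominance statement of Theorem \ref{theoremA}.
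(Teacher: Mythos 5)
Your proof is correct and takes essentially the same approach as the paper's, which likewise verifies the two best responses: the block producer's maximum cost to omit $t_0$ when no includer lists it is $f_{BP} < B_1$, and an includer who lists $t_0$ gains nothing since the committee fee is paid only if $t_0$ reaches the block. Your explicit specification of the block producer's off-path padding strategy and the contrast with the dominance bound of Theorem \ref{theoremA} are sound elaborations of what the paper leaves implicit.
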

\begin{proof} This holds because: (i) when the block producer omits $t_0$ then the committee does not receive $f_{CM}$ even if they include $t_0$ in an inclusion list and (ii) when $t_0$ is not in an inclusion list, the maximum cost  of the block producer to omit it is $f_{BP}$.
\end{proof}
\begin{theorem}\label{theoremC} Regardless of the payment rule, if $B_1>f_{BP}$ and  $\forall j \in \{ 1, \ldots m \} :B^j >f_{CM}$ there is no Nash equilibrium where $t_0$ is included in the block.
\end{theorem}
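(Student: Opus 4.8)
The plan is to argue by contradiction: suppose some strategy profile is a Nash equilibrium in which $t_0$ is included in the block $B_k$, and then exhibit a player with a strictly profitable unilateral deviation. I would split the analysis according to whether $t_0$ appears in at least one inclusion list, since the two scenarios are broken by different players.

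First I would dispose of the case where $t_0$ lies in the inclusion list of some includer $j$. By definition the total committee fee associated to $t_0$ is $f_{CM}$, so (using $\mu^{Cost}_{CM}=0$) includer $j$ can collect at most $f_{CM}$ from $t_0$ under any payment rule. Deviating by omitting $t_0$ from $\mathsf{IL}_j$ triggers the bribe $B^j > f_{CM}$ while costing $j$ only their forgone committee share, which is at most $f_{CM}$; freeing the inclusion-list slot can only help. Hence the deviation strictly increases $j$'s utility, contradicting equilibrium. This shows that in any equilibrium with $t_0 \in B_k$, no includer includes $t_0$.

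Then I would treat the remaining case: $t_0 \in B_k$ but $t_0$ is in no inclusion list. The crucial protocol observation is that, because $t_0$ is absent from every inclusion list, dropping it from the block violates no inclusion-list rule, so the block stays valid and the block producer need not add any fake transaction to ``fill'' the block. This is exactly what separates the weaker hypothesis $B_1 > f_{BP}$ used here from the stronger $B_1 > f_{BP} + r\cdot s\cdot(\max\{c_{block}-w+1,0\})$ of Theorem~\ref{theoremA}. Since the block producer earns at most $f_{BP}$ from $t_0$, omitting it yields the bribe $B_1 > f_{BP}$ at no fake-fill cost and is strictly profitable, again contradicting equilibrium. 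Combining the two cases, no Nash equilibrium can place $t_0$ in the block.

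The step requiring the most care is the second case: I would verify that, for each conditional-list implementation, removing $t_0$ keeps the block compliant, because deleting a non-inclusion-list transaction only frees space and never forces out an inclusion-list transaction, so no attester rejection and no burning cost is incurred. I would also keep the ``regardless of payment rule'' quantifier honest by arguing solely from the defining upper bounds $f_{CM}$ and $f_{BP}$ on what any single includer and the block producer can extract from $t_0$, rather than from any particular split of the fee.
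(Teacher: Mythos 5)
Your proof is correct and follows essentially the same route as the paper's: first rule out any includer listing $t_0$ (since $B^j > f_{CM}$ exceeds the at-most-$f_{CM}$ an includer can extract under any payment rule), then observe that with $t_0$ in no inclusion list the block producer omits it at cost at most $f_{BP} < B_1$ without violating the inclusion-list rules, contradicting equilibrium. Your explicit case split and the remark contrasting the hypothesis with Theorem~\ref{theoremA} are just a more careful presentation of the paper's two-step argument.
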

\begin{proof} We will prove it by contradiction. Let us assume that there exist such a Nash equilibrium. This means that in this strategy profile the block producer has included $t_0$ in the block. Moreover, as  $\forall j \in \{ 1, \ldots m \} :B^j >f_{CM}$ , it holds that no includer has included $t_0$ in their inclusion list; otherwise they could increase their utility by omitting $t_0$ and receiving the bribe. Note that regardless of the payment rule, the reward of an includer who includes $t_0$ cannot be more than $f_{CM}$. As $t_0$ is not in an inclusion list, the block producer can omit it and incur maximum cost $f_{BP}$. We reach a contradiction because the block producer can increase their utility by omitting it.
\end{proof}
\subsubsection{Unconditional Lists}

\begin{theorem}  When $m \cdot c_{Incl} \leq c_{block}$, regardless of the payment rule, if $B_1< sum$, it is a dominant strategy for the block producer to follow an allocation rule that includes $t_0$ if it is included in an inclusion list. \end{theorem}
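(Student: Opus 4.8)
The plan is to reduce the statement to a single two-way comparison and then invoke the defining feature of unconditional inclusion lists. The only non-vacuous situation is when the includers' strategies place $t_0$ in at least one inclusion list (otherwise the clause ``includes $t_0$ if it is included in an inclusion list'' constrains nothing), so I would fix an arbitrary includer strategy profile under which $t_0 \in \mathsf{IL}$ and compare the block producer's two relevant alternatives: produce a valid block that contains $t_0$, or drop $t_0$ and collect the bribe $B_1$. The whole argument then turns on showing the first alternative is strictly better whenever $B_1 < sum$.

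The steps, in order, would be as follows. First I would pin down the cost of omitting $t_0$: under \emph{unconditional} lists a block that drops any inclusion-list transaction is rejected by the attesters \emph{even if it is full}, so padding the block with fake transactions cannot rescue it; moreover $t_0$ is a genuine user transaction (not in $F^{Init}_{BP}$) and the simplified model forbids mempool fakes, so the block producer has no way to invalidate $t_0$. Hence omitting $t_0$ forfeits \emph{all} block rewards, leaving utility exactly $B_1$ (the bribe plus $0$ from a rejected block). This is precisely the contrast with the conditional case of Theorem~\ref{theoremA}, where dropping $t_0$ costs only $f_{BP}$ plus a little burning while the block stays valid. Second, I would use the hypothesis $m \cdot c_{Incl} \leq c_{block}$ to argue that a valid block including $t_0$ always exists and is attainable: every inclusion-list transaction fits simultaneously, so the block producer can include $t_0$ together with all other inclusion-list entries and fill any remaining slots with the highest-fee available transactions, collecting the full block reward $sum$ (nonnegative since $\mu^{Cost}_{BP}=0$). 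Third, I would compare the two utilities: $sum > B_1$ by hypothesis, so including $t_0$ strictly dominates omitting it. Since this holds for every includer profile in which $t_0 \in \mathsf{IL}$, and since omitting $t_0$ leads to rejection regardless of how fees are split, the allocation rule that includes $t_0$ whenever it lies in an inclusion list is a dominant strategy independent of the payment rule (the rule only fixes the numerical value of $sum$).

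The main obstacle is making precise that the best \emph{valid} block reward the producer can obtain while including $t_0$ equals $sum$ rather than something strictly smaller, and that this is robust to the includers' behavior. The worry is that forced inclusion of all inclusion-list transactions might crowd out higher-fee transactions and depress the reward below $B_1$; I would dispel this by observing that $m \cdot c_{Incl} \leq c_{block}$ leaves enough capacity to carry every inclusion-list transaction and still greedily fill the remaining space, so the optimal valid allocation coincides with the honest allocation whose reward is $sum$. The only remaining point needing care is the phrase ``regardless of the payment rule'': since $sum$ and $B_1$ are measured in the same units of block-producer revenue for whichever payment rule is fixed, the inequality $B_1 < sum$ transfers verbatim and no property of the committee/block-producer split is used.
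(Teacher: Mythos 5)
Your proposal is correct and follows essentially the same route as the paper's (much terser) proof: under unconditional lists, omitting any inclusion-list transaction forfeits the entire block reward $sum$ since padding cannot prevent rejection, the excluded mempool-fake strategy rules out cheap invalidation of $t_0$, and $B_1 < sum$ settles the comparison. You merely make explicit what the paper leaves implicit --- the role of $m \cdot c_{Incl} \leq c_{block}$ in guaranteeing a valid block containing $t_0$ with reward $sum$, the fact that $t_0 \notin F^{Init}_{BP}$ cannot be invalidated, and the vacuity of the case $t_0 \notin \mathsf{IL}$ --- all of which is consistent with the paper's argument.
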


\begin{proof}This holds because if the block producer ignores any transaction from an inclusion list, they will lose their block rewards. Recall that in this section we have excluded strategies where the block producer can add fake transactions to the mempool to capture space in the inclusion lists, thereby allowing them to omit $t_0$ at a lower cost.
\end{proof}
\begin{theorem} 
Assume $m \cdot w \leq m \cdot c_{Incl} \leq c_{block}$. Under the payment rule where the committee fee is awarded to the includer with the smallest order who includes it, if ($B_1< sum$ AND $\sum_{j \in \{1, \ldots m\}} \cdot B^j <m\cdot  f_{CM}$), there is no Nash equilibrium where $t_0$ is not included in the block. 
\end{theorem}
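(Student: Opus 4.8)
The plan is to argue by contradiction, leaning on the preceding theorem of this subsection, which guarantees that since $B_1 < sum$ and $m \cdot c_{Incl} \le c_{block}$, the block producer's dominant allocation rule includes $t_0$ whenever $t_0$ lies in some inclusion list. Suppose toward a contradiction that there is a Nash equilibrium in which $t_0$ is not included in the block. Because the block producer follows this dominant allocation rule, $t_0$ appearing in any inclusion list would force it into the block; hence in this equilibrium $t_0$ must be absent from every inclusion list $\mathsf{IL_1}, \ldots, \mathsf{IL_m}$. This reduction is the first step, and it converts the whole claim into a statement about the includers.

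Next I would isolate a single includer who can profitably deviate. From the hypothesis $\sum_{j \in \{1,\ldots,m\}} B^j < m \cdot f_{CM}$ an averaging argument produces an index $j^*$ with $B^{j^*} < f_{CM}$: if every $B^j$ were at least $f_{CM}$, the sum would be at least $m \cdot f_{CM}$, contradicting the hypothesis. This is the includer whose promised bribe is too small to deter inclusion of $t_0$.

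The core step is to show that the unilateral deviation of $j^*$, namely adding $t_0$ to $\mathsf{IL_{j^*}}$ while every other strategy is held fixed, strictly increases $j^*$'s utility. Here the capacity assumptions do the real work: from $m \cdot w \le m \cdot c_{Incl}$ we get $w \le c_{Incl}$, so $\mathsf{IL_{j^*}}$ has room for $t_0$ without evicting any transaction it already contains; and from $m \cdot c_{Incl} \le c_{block}$ the block can hold the union of all inclusion lists, so the block producer's dominant rule admits $t_0$ alongside every previously included transaction, dropping nothing. Consequently the deviation leaves all of $j^*$'s other committee fees unchanged, and its only net effect on $j^*$ is that $j^*$ becomes the unique (hence smallest-order) includer of $t_0$, is paid the full committee fee $f_{CM}$, and forfeits the bribe $B^{j^*}$. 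Since costs are zero in this model, the utility change is exactly $f_{CM} - B^{j^*} > 0$, a profitable deviation that contradicts the equilibrium assumption. Therefore no such equilibrium exists, i.e.\ $t_0$ is included in the block in every Nash equilibrium.

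The main obstacle I anticipate is not a hard inequality but the clean-deviation bookkeeping: I must verify that inserting $t_0$ neither displaces another transaction from $\mathsf{IL_{j^*}}$ nor causes the block producer to drop a transaction that was supplying $j^*$ a committee fee elsewhere, and both are guaranteed precisely by the two capacity inequalities. A secondary point worth stating carefully is why we may take the block producer to include $t_0$ in response to the off-path deviation of $j^*$: this is exactly the content of the preceding theorem, which establishes that including every inclusion-list transaction is a dominant allocation rule under $B_1 < sum$, so we invoke it to evaluate $j^*$'s deviation just as that theorem is applied.
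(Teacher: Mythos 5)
Your proposal is correct and follows essentially the same route as the paper's proof: contradiction, invoking the preceding theorem to conclude the block producer would include $t_0$ if listed (hence no includer lists it), extracting an includer $j^*$ with $B^{j^*} < f_{CM}$ by averaging, and showing that $j^*$'s unilateral inclusion of $t_0$ yields a strict gain of $f_{CM} - B^{j^*} > 0$. Your additional bookkeeping---that $w \le c_{Incl}$ and $m \cdot c_{Incl} \le c_{block}$ guarantee the deviation displaces nothing---merely makes explicit what the paper leaves implicit in its capacity hypothesis.
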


\begin{proof} We will prove it by contradiction. Assume that such a Nash equilibrium exists. As it holds $B_1< sum$, we know by the previous theorem that at a Nash equilibrium, the block producer follows an allocation rule that includes $t_0$ if it is included in an inclusion list. As $t_0$ has not been included in the block, this means that no includer has included $t_0$ in their inclusion list.  As it holds  $\sum_{j \in \{1, \ldots m\}} \cdot B^j <m\cdot  f_{CM}$, there is at least one includer $j$ with $B^j<f_{CM}$. We reach a contradiction because this includer can increase their utility by deviating and including $t_0$ in their inclusion list, as they know that this transaction will be included in the block and will give them committee fee $f_{CM}$. 
\end{proof}

Note that in \cite{cryptoeprint:2025/194}, the authors propose a notion for censorship resistance based on the minimum amount required to censor a transaction.

\subsubsection{Conclusion}

Based on the above theorems, we conclude that unconditional lists can significantly  increase the cost of bribing at a Nash equilibrium provided we exclude strategies where the block producer can add fake transactions to the mempool. However, if we include the latter strategy in the strategy space, the cost of bribing between conditional and unconditional lists becomes similar. This holds because, in both cases, adding fake transactions to the mempool and later invalidate them is usually the lowest-cost strategy that allows the block producer to omit a transaction from an inclusion list.

\end{document}